\DeclareMathOperator{\trace}{Tr}
\newtheorem{theorem}{Theorem}
\newtheorem{assumption}{Assumption}
\newtheorem{remark}{Remark}
\newtheorem{claim}{Claim}
\newtheorem{proof}{Proof}
\newcommand{\rhoa}{\bar{\rho}_A}
\newcommand{\rhos}{\rho_s}
\newcommand{\rhoak}{\bar{\rho}_A^{(k)}}
\newcommand{\rhoalessk}{\bar{\rho}_A^{[k]}}
\newcommand{\rhoakk}{\bar{\rho}_A^{(k,k^\prime)}}
\newcommand{\rhoalesskk}{\bar{\rho}_A^{[k,k^\prime]}}
\newcommand{\hc}{\textit{herm. conj.}}
\newcommand{\ket}[1]{ \left| #1 \right> }
\newcommand{\bra}[1]{ \left< #1 \right| }
\newcommand{\braket}[2]{\left< #1 {|} #2 \right> }
\title{\LARGE \bf
Adiabatic elimination for multi-partite open quantum systems with non-trivial zero-order dynamics
}
\author{Paolo Forni$^{1}$, Alain Sarlette$^{2}$, 
 Thibault Capelle$^{3}$, Emmanuel Flurin$^{3}$, Samuel Deléglise$^{3}$, \\ and Pierre Rouchon$^{1}$
%
 \thanks{$^{1}$Centre Automatique et Systèmes, Mines-ParisTech, PSL Research
 University. 60 Bd Saint-Michel, 75006 Paris, France; and INRIA Paris, 2 rue Simone Iff, 75012 Paris, France.
 {\tt\small paolo.forni@inria.fr,  pierre.rouchon@mines-paristech.fr}. }
 \thanks{$^{2}$INRIA Paris, 2 rue Simone Iff, 75012 Paris, France; and Ghent
 University / Data Science Lab, Technologiepark 914, 9052 Zwijnaarde,
 Belgium. {\tt\small alain.sarlette@inria.fr}. }%
 \thanks{$^{3}$Laboratoire Kastler Brossel, ENS-PSL, CNRS, Sorbonne Université et Collège de France, Paris, France.
 {\tt\small thibault.capelle@lkb.upmc.fr}, {\tt\small emmanuel.flurin@lkb.upmc.fr}, {\tt\small samuel.deleglise@lkb.upmc.fr}. }%
 }
\begin{document}

\maketitle
\thispagestyle{empty}
\pagestyle{empty}

\begin{abstract}
We provide model reduction formulas for open quantum systems consisting of a target component 
which weakly interacts with a strongly dissipative environment. The time-scale separation between 
the uncoupled dynamics and the interaction allows to employ tools from center manifold theory and 
geometric singular perturbation theory to eliminate the variables associated to the environment 
(adiabatic elimination) with high-order accuracy. An important specificity is to preserve the
quantum structure: reduced dynamics in (positive) Lindblad form and coordinate mappings in Kraus
form. We provide formulas of the reduced dynamics. The main contributions of this paper are (i)
to show how the decomposition of the environment into $K$ components enables its efficient treatment,
avoiding the quantum curse of dimension; and (ii) to extend the results to the case where the target 
component is subject to Hamiltonian evolution at the fast time-scale. We apply our theory to a 
microwave superconducting quantum resonator subject to material losses, and we show
that our
reduced-order model 
can explain the transmission spectrum observed in a recent pump probe experiment.
\end{abstract}

\section{Introduction}
\label{sec:intro}

The evolution of a quantum system interacting with an environment is rigorously described by a Schr\"{o}dinger
equation on the joint Hilbert space. However, the complexity of the environment hampers the study of the system as a whole and one often resorts to the Born-Markov approximation to obtain a Lindblad master equation \cite{breuer2002theory} describing the target system alone, and the environment's effect summarized by dissipation or ``decoherence'' operators. Similarly, when a quantum system consists of several interacting components, e.g. a main computing subsystem coupled to an ancillary subsystem expressing a measurement device, one often seeks to analyze a dynamical equation for the main subsystem alone, approximately including the effect of the ancillary subsystem.
In this perspective, model reduction methods come to aid to the physicists interested in gaining better physical insights, in running simplified numerical simulations, and in designing the dynamics of a target subsystem by smartly engineering its interaction with other subsystems, as in the case of reservoir engineering \cite{reservoirengineering1996}.

A classical approach to model reduction for quantum systems makes use of the time-scale separation between
a slow subsystem of interest and the fast auxiliary subsystems coupled to it, and eliminates
the fast variables in a procedure denominated as \emph{adiabatic elimination}. 
In closed quantum systems -- where the evolution stays unitary under Hamiltonian dynamics --
adiabatic elimination is performed by means of standard perturbation theory techniques \cite{sakurai2017modern}.
In contrast, the treatment of open quantum systems -- including decoherence under Lindbladian dynamics -- is more involved. In the literature, adiabatic elimination in the latter case has been 
addressed for specific examples separately: lambda systems up to second-order \cite{adiabelimlambdasys},
a specific atom-optics example \cite{atkins2003}, systems where excited states decay toward $n$ ground states
\cite{mirrahimirouchon2009,reitersorensen2012}, systems with Gaussian dynamics and subject to continuous
measurement \cite{adiabelimgauss}.

However, general approaches to adiabatic elimination of Lindblad systems -- and maintaining the positivity-preserving quantum structure, beyond a standard linear systems treatment via singular perturbation theory -- have attracted much less attention. In \cite{kesslerSW}, Kessler has developed a generalization of the Schrieffer-Wolff formalism; in \cite{Gough2007,BOUTEN20083123}, the authors address quantum stochastic differential equations in the limit where the speed of the fast system goes to infinity. A geometric approach to adiabatic elimination has been introduced by \cite{azouit2016cdc,azouitQST2017}, where the authors explore an asymptotic expansion of the reduced dynamics by a careful application of center manifold techniques \cite{carr2012applications} and geometric singular perturbation theory \cite{FENICHEL197953}. In order to succesfully retain the physical interpretation, the reduced dynamics is expressed by Lindblad equations and is mapped to the original dynamics via a trace-preserving completely-positive (CPTP) map, also called Krauss map.

The present work builds upon the geometric approach of \cite{azouit2016cdc,azouitQST2017} and brings forward two novel features. First, unlike in \cite{azouit2016cdc,azouitQST2017} where the target system was assumed to be static in the ideal case, we here develop formulas for the case where the target system undergoes non-trivial fast Hamiltonian dynamics, when uncoupled from the environment.
This appears in all practical situations where the target system is detuned from the reference frame, e.g. when the target system undergoes (in this paper constant) drives to implement quantum operations.
Second, we consider environments that consist not of a single bulk system, but which can be decomposed into a not-necessarily-finite number of fast dissipative subsystems. Such situations often appear in practice when the target quantum system is corrupted by various imperfection sources \cite{TLSpaper}.
We show how to take advantage of this decomposition towards more efficient model reduction computations.
Indeed, the first-order approximation amounts to the sum of the contributions of each fast dissipative
subsystem, and the same result holds for the second-order approximation under specific commutation
properties of the operators involved in the computation.
This is a substantial gain because the difficult operations involve
inversion of the Lindbladian superoperator precisely over the environment dimension.
The proposed theory is applied to a model of a microwave superconducting resonator subject to dielectric
losses due to a bath of many two-level-systems. We show how a reduced model resulting from our theory allows to 
explain the non-trivial transmission spectrum observed in a pump probe experiment.

The outline of the paper is as follows. Setting and main assumptions are introduced in Section \ref{sec:setting}.
Section \ref{sec:manyfast} provides our main results with the formulas of our adiabatic elimination for the case of
many fast dissipative subsystems weakly coupled to the target one.
Section \ref{sec:application} contains the application and comparison
to experimental data. We conclude the paper with few final remarks. Proof and computation details are given in
appendix.

\section{Setting}
\label{sec:setting}

\subsection{K-partite systems with non-trivial zero-order dynamics}
\label{ssec:settingbipartite}

Open quantum systems are typically described by differential equations evolving on the manifold $\mathcal{M}$ of
density operators $\rho$, namely the set of all
linear Hermitian nonnegative operators from a Hilbert space $\mathcal{H}$ to itself, whose trace equals one.
The evolution of an open quantum system is then described by the Lindblad master equation
\cite{breuer2002theory}:
\begin{equation}
 \frac{d \rho}{d t} = \mathcal{L}(\rho) = -i \left[ \bm{H},\rho \right] +
 \sum_{\mu} { \mathcal{D}_{\bm{L}_\mu}(\rho) } \nonumber , 
\end{equation}
where each $\bm{L}_\mu$ is a ``decoherence'' operator on $\mathcal{H}$, $\bm{H}$ is a Hermitian ``Hamiltonian'' operator on $\mathcal{H}$,
and $\mathcal{D}$ is a superoperator defined by:
\begin{equation}
\mathcal{D}_{\bm{L}_\mu} (\rho) := \bm{L}_\mu \rho \bm{L}^\dag_\mu - \frac{1}{2} \bm{L}^\dag_\mu  \bm{L}_\mu  \rho -\frac{1}{2} \rho  \bm{L}^\dag_\mu  \bm{L}_\mu   \nonumber .
\end{equation}

In this paper, we consider the composite Hilbert space $\mathcal{H} := \mathcal{H}_A \otimes \mathcal{H}_B$ of a target quantum system on $\mathcal{H}_B$ and its environment on $\mathcal{H}_A$. The dynamics on $\mathcal{H}$ satisfies a time scale separation:
\begin{equation}
 \frac{d \rho}{dt} =   \mathcal{L}_A(\rho) + \varepsilon \mathcal{L}_{int}(\rho)
 +\varepsilon \mathcal{L}_B(\rho) + (-i) [ \tilde{\bm{H}}_B, \, \rho ]  \label{eq:systemsinglefast} ,
\end{equation}
where $\varepsilon$ is a small positive parameter; $\mathcal{L}_A$ and $\mathcal{L}_B$ are Lindbladian super-operators acting exclusively on $\mathcal{H}_A$ and $\mathcal{H}_B$ respectively; $\mathcal{L}_{int}$ is a Lindbladian superoperator which captures the interaction between $\mathcal{H}_A$ and $\mathcal{H}_B$. Here we assume that this interaction is Hamiltonian and expressed as:
\begin{equation}
 \mathcal{L}_{int}(\rho) := -i \left[ \bm{A} \otimes \bm{B}^\dag + \bm{A}^\dag \otimes \bm{B} , \,\, \rho \right] \nonumber ,
\end{equation}
where $\bm{A}$ and $\bm{B}$ respectively are non-necessarily-Hermitian operators acting on 
$\mathcal{H}_A$ and $\mathcal{H}_B$ only.
The resonant interaction from of $\mathcal{L}_{int}$ models a wide range of applications; general interactions
will be addressed by future works.
Finally, $\tilde{\bm{H}}_B$ is a Hamiltonian operator on $\mathcal{H}_B$, thus expressing fast unitary dynamics on the target system; its presence is the first novelty in our paper. For a set of interesting situations, the dynamics of typical quantum systems can be expressed in a rotating frame where the term $\tilde{\bm{H}}_B$ would vanish. However, several reasons can justify to keep this term. For instance, in many significant situations the vanishing of $\tilde{\bm{H}}_B$ is not rigorous and involves an additional treatment of appearing fast time-varying parameters in the equation via averaging theory; or, $\tilde{\bm{H}}_B$ can be a term of particular interest like a field to be measured with the quantum device or an actuation towards applying some operation on the target system.

As a second novelty, we consider a generalized setting where $\mathcal{H}_A = \bigotimes_{k}\mathcal{H}_A^{(k)} $ is composed of a non-necessarily-finite number of Hilbert spaces $\mathcal{H}_A^{(k)}$. Each subsystem
on $\mathcal{H}_A^{(k)}$ is strictly dissipative and interacts with $\mathcal{H}_B$ only. Then, system \eqref{eq:systemsinglefast} reads as:
\begin{equation}
 \frac{d \rho}{dt} = \sum_{k}{ \Big( \mathcal{L}_A^{(k)}(\rho) + \varepsilon \mathcal{L}_{int}^{(k)}(\rho) \Big) }
 + \varepsilon \mathcal{L}_B(\rho) + (-i) \left[ \tilde{\bm{H}}_B, \, \rho \right] \label{eq:systemmanyfast}
\end{equation}
where $\mathcal{L}_A^{(k)}$ acts on
$\mathcal{H}_A^{(k)}$ only and where
\begin{align}
  \mathcal{L}_{int}^{(k)}(\rho) := & -i \left[ \bm{A}^{(k)} \otimes \bm{B}^\dag + \bm{A}^{(k)\,\dag} \otimes \bm{B} , \,\, \rho \right] \nonumber ,
\end{align}
captures the Hamiltonian interaction between $\mathcal{H}_A^{(k)}$ and $\mathcal{H}_B$, with
$\bm{A}^{(k)}$ non-necessarily-Hermitian operators acting on $\mathcal{H}_A^{(k)}$ only.
The interaction is here restricted to the case of the same operator $\bm{B}$ for each subsystem $k$.
While the general case will be the subject of future research, having the same operator $\bm{B}$ for each interaction
still models a wide range of applications.

For $\varepsilon=0$, the system is uncoupled and the solution trajectories stay separable for all times, namely for $\rho(0) = \bigotimes_{k}\rho_A^{(k)}(0) \otimes \rho_B(0)$ we have $\rho(t) = \bigotimes_{k}\rho_A^{(k)}(t) \otimes \rho_B(t)$ for all times, with each factor in the product following its independent dynamics. To apply adiabatic elimination, we assume that each part of the environment is highly dissipative and relaxes fast to a unique steady state, i.e.: for any initial state $\rho_0$ on $\mathcal{H}_A \otimes \mathcal{H}_B$,
the solution of the uncoupled system $\varepsilon=0$ converges to $\bigotimes_{k}{\bar{\rho}_A^{(k)}} \otimes \rho_B(t)$ where, for each $k$, $\bar{\rho}_A^{(k)}$ is the unique solution of $\mathcal{L}_A^{(k)}\left( \bar{\rho}_A^{(k)} \right) =0$; and $\rho_B(t)$ satisfies 
$\dot{\rho}_B = -i [\tilde{\bm{H}}_B,\rho_B]$ with $\rho_B(0) = \trace_A(\rho_0)$.
For ease of presentation, we will also denote 
$\bar{\rho}_A := \bigotimes_{k}{\bar{\rho}_A^{(k)}}$.

\subsection{Asymptotic expansion}
\label{ssec:asymptoticexpansion} 

Both in the bi-partite and the $K$-partite case, for the uncoupled system $\varepsilon=0$, there exists an asymptotically stable center manifold $\mathcal{M}_0$ of same dimension as $\mathcal{H}_B$, on which the dynamics  have imaginary eigenvalues. It thus follows from Fenichel's Invariant Manifold Theorem \cite{FENICHEL197953} that,
for small enough $\varepsilon>0$, there exists an invariant and attractive manifold $\mathcal{M}_\varepsilon$
which has the same dimension as $\mathcal{M}_0$ and which is $\varepsilon-$close to it.
Furthermore, by virtue of linearity and Carr's result \cite{carr2012applications},
$\mathcal{M}_\varepsilon$ is a vector subspace and its approximation can be computed up to
arbitrary precision.
The quantum particularity, as explained in \cite{azouitQST2017}, is that such approximation should retain a physical interpretation by preserving the quantum structure:
(i) the mapping from the reduced space to the complete space is a mapping between density operators,
and $\mathcal{M}_\varepsilon$ can be parameterized by
$\mathcal{M}_\varepsilon := \left\{ \rho \in \mathcal{H} : \, \rho = \mathcal{K}(\rho_s), \, \rho_s \in \mathcal{H}_s \right\}$
for some Hilbert space $\mathcal{H}_s$ that has same dimension as $\mathcal{H}_B$, and where
$\mathcal{K}(\cdot )$ is a Kraus map\footnote{A Kraus map takes the form
$\rho = \mathcal{K}(\rho_s) := \sum_{\ell}{ \bm{M}_\ell \, \rho_s  \bm{M}_\ell^\dag }$
for some operators $\bm{M}_\ell$ in order to express any completely positive superoperator \cite{CHOI1975285}, and with $\sum_{\ell}{ \bm{M}_\ell \, \rho_s \bm{M}_\ell^\dag } = \bm{I}$ ensuring trace-preservation i.e. $\trace \left( \mathcal{K}(\rho_s) \right)= \trace(\rho_s) = 1$.
}; (ii) the reduced dynamics on $\mathcal{M}_\varepsilon$ are Lindbladian, i.e. $\dot{\rho_s}=\mathcal{L}_s(\rho_s)$ for some Lindbladian superoperator $\mathcal{L}_s$.

In other words, we aim to find a Kraus map $\rho = \mathcal{K}(\rho_s)$ and a Lindbladian $\mathcal{L}_s$ such that
the following invariance equation is satisfied for all $\varepsilon$ small enough and
for all $\rho_s$:
\begin{align}
 & \mathcal{L}_A(\mathcal{K}(\rho_s)) + \varepsilon \mathcal{L}_{int}(\mathcal{K}(\rho_s))
 +\varepsilon \mathcal{L}_B(\mathcal{K}(\rho_s)) \nonumber   \\
 & + (-i) [ \tilde{\bm{H}}_B, \, \mathcal{K}(\rho_s) ] \;\;\;\;  =
 \mathcal{K}(\mathcal{L}_{s}(\rho_s)) \label{eq:invariancecondition} .
\end{align}
By virtue of 
Carr's result \cite{carr2012applications},
we first parameterize both the Kraus map and the Lindbladian as infinite series:
\begin{equation}
 \mathcal{K}(\rho_s) := \sum_{h=0}^{+\infty}{\varepsilon^h \,\, \mathcal{K}_h(\rho_s)}  , \;\;\;\; 
 \mathcal{L}_s(\rho_s) := \sum_{h=0}^{+\infty}{\varepsilon^h \,\, \mathcal{L}_{s,h}(\rho_s)} \label{eq:param} ;
\end{equation}
then, by identifying the terms of the same order of $\varepsilon$ in the invariance equation (\ref{eq:invariancecondition}),
we obtain an invariance relation at all orders $h$. At zero-order, we have:
\begin{equation}
 \mathcal{L}_A\left(\mathcal{K}_0(\rho_s)\right) + 
 (-i) \left[ \tilde{\bm{H}}_B, \mathcal{K}_0(\rho_s) \right]  = \mathcal{K}_0 \left( \mathcal{L}_{s,0}(\rho_s) \right) \label{eq:zeroorderinvariance} .
\end{equation}
Similarly, the first-order invariance condition reads as:
{\small \begin{align}
& \mathcal{L}_A\left(\mathcal{K}_1(\rho_s)\right) 
 + \mathcal{L}_{int}\left( \mathcal{K}_0(\rho_s) \right)
 + \mathcal{L}_B\left( \mathcal{K}_0(\rho_s) \right) \nonumber \\
& \;\;\;\; -i \left[ \tilde{\bm{H}}_B , \, \mathcal{K}_1(\rho_s) \right] = 
 \mathcal{K}_0\left( \mathcal{L}_{s,1}(\rho_s)  \right)
 +\mathcal{K}_1\left(\mathcal{L}_{s,0}(\rho_s)  \right)  \label{eq:firstorderinvariance} ,
\end{align}}

{\parskip = -2mm
\noindent
whereas the second-order invariance condition reads as:}
{\small \begin{align}
& \mathcal{L}_A\left( \mathcal{K}_2(\rho_s)  \right) 
 + \mathcal{L}_{int} \left( \mathcal{K}_1(\rho_s) \right)   \nonumber \\
 & \;\;\;\;\; + \mathcal{L}_B\left( \mathcal{K}_0(\rho_s) \right) 
 -i \left[ \tilde{\bm{H}}_B , \, \mathcal{K}_2(\rho_s) \right] \nonumber \\
 & = 
 \mathcal{K}_0\left( \mathcal{L}_{s,2}(\rho_s)  \right)
 +\mathcal{K}_1\left( \mathcal{L}_{s,1}(\rho_s)  \right)
 +\mathcal{K}_2\left(\mathcal{L}_{s,0}(\rho_s)  \right) .  \label{eq:secondorderinvariance}
\end{align}}
For more details about the asymptotic expansion approach to adiabatic elimination,
we refer the reader to \cite{azouitQST2017,azouit2016cdc}.

\section{Reduced-model formulas}
\label{sec:manyfast}

The aim of this Section is to provide explicit solutions to the zero-, first-, and second-order invariance equations (\ref{eq:zeroorderinvariance})-(\ref{eq:secondorderinvariance}) for the case of $K$-partite systems as introduced
in Section \ref{ssec:settingbipartite}, i.e. for model (\ref{eq:systemmanyfast}). 
We immediately observe that the zero-order (\ref{eq:zeroorderinvariance}) is naturally solved by setting:
\begin{equation}
\mathcal{L}_{s,0}(\rho_s)  : =  -i \left[ \tilde{\bm{H}}_B , \, \rho_s \right], \;\;\;\;
\mathcal{K}_0(\rho_s)  : = \left( \bigotimes_{k}{\bar{\rho}_A^{(k)}} \right) \otimes \rho_s \label{eq:zeroorderK} .
\end{equation}

At first order, let the Kraus map have the following structure inspired by \cite{azouitQST2017}:
\small \begin{align}
 & \mathcal{K}(\rho_s) = \mathcal{K}_0(\rho_s) + \varepsilon \mathcal{K}_1(\rho_s) :=  \nonumber \\
 & \;\; \left( \bm{I} -i \varepsilon \bm{M} \right) \left(\bar{\rho}_A \otimes \rho_s \right)
 \left( \bm{I} +i \varepsilon \bm{M}^\dag \right)  + \mathcal{O}\left( \varepsilon^2 \right) \label{eq:kmap10} ,
\end{align}
where $ \bm{M} := \sum_{k}  \bm{M}^{(k)}$, $\bm{M}^{(k)} :=   \bm{F}_1^{(k)} \otimes \bm{B}^\dag + \bm{F}_2^{(k)} \otimes \bm{B}$ for any $k$.
This would immediately imply that:
\begin{equation}
 \mathcal{K}_1(\rho_s) = -i \bm{M} \left(\bar{\rho}_A \otimes \rho_s \right) + i
 \left(\bar{\rho}_A \otimes \rho_s \right) \bm{M} \label{eq:K1shape00} .
\end{equation}


The following assumption will be instrumental in establishing our main results.
\begin{assumption}
\label{assumption:cbassumption}
 There exists $c_{B^\dag} \in \mathbb{R}$ such that:
\begin{equation}
\left[ \tilde{\bm{H}}_B, \bm{B}^\dag\right] = c_{B^\dag} \bm{B}^\dag  \label{eq:cBproperty}.
\end{equation} 
\end{assumption}
\begin{theorem}
\label{th:manyFastFirstOrder}
Consider model (\ref{eq:systemmanyfast}). Let Assumption \ref{assumption:cbassumption} hold.
Then, the first-order invariance equation (\ref{eq:firstorderinvariance}) is satisfied by
the Lindbladian $\mathcal{L}_{s,1}(\rho_s) = \mathcal{L}_B(\rho_s)$ and by a map $\mathcal{K}_1$
 of the form (\ref{eq:K1shape00}) 
 where, for each $k$, $\bm{F}_1^{(k)},\bm{F}_2^{(k)}$ respectively are the unique solutions of:
{\small \begin{subequations}
\begin{align}
 \mathcal{L}_A^{(k)}\left( \bm{F}_1^{(k)} \bar{\rho}_A^{(k)}  \right)  + \bm{A}^{(k)} \bar{\rho}_A^{(k)} -i \, c_{B^\dag}\, \bm{F}_1^{(k)} \bar{\rho}_A^{(k)} = & 0, \label{eq:singleFastFirstOrderSelectionOneMany1} \\
 \mathcal{L}_A^{(k)}\left( \bm{F}_2^{(k)} \bar{\rho}_A^{(k)}  \right)  + \bm{A}^{(k)\,\dag} \bar{\rho}_A^{(k)} + i \, c_{B^\dag}^\ast \, \bm{F}_2^{(k)} \bar{\rho}_A^{(k)} = & 0 \label{eq:singleFastFirstOrderSelectionOneMany2} .
\end{align} \label{eq:singleFastFirstOrderSelectionOneMany}
\end{subequations}}
Furthermore, $\mathcal{K}(\rho_s) = \mathcal{K}_0(\rho_s) + \varepsilon \mathcal{K}_1(\rho_s)$ is a CPTP map up to second-order terms.
\end{theorem}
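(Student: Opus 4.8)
The plan is to substitute the proposed data $\mathcal{K}_0,\mathcal{K}_1$ of (\ref{eq:zeroorderK}), (\ref{eq:kmap10}) and (\ref{eq:K1shape00}), together with $\mathcal{L}_{s,0}(\rho_s)=-i[\tilde{\bm{H}}_B,\rho_s]$ and $\mathcal{L}_{s,1}=\mathcal{L}_B$, into the first-order invariance equation (\ref{eq:firstorderinvariance}) and to show that it reduces, after identification of operator coefficients on each $\mathcal{H}_A^{(k)}$, exactly to (\ref{eq:singleFastFirstOrderSelectionOneMany}). The choice $\mathcal{L}_{s,1}=\mathcal{L}_B$ is natural because $\mathcal{L}_B$ and $\mathcal{K}_0$ act on disjoint tensor factors, so that $\mathcal{L}_B(\mathcal{K}_0(\rho_s))=\rhoa\otimes\mathcal{L}_B(\rho_s)=\mathcal{K}_0(\mathcal{L}_{s,1}(\rho_s))$; these two terms cancel, and what remains to be verified is $\mathcal{L}_A(\mathcal{K}_1(\rho_s))+\mathcal{L}_{int}(\mathcal{K}_0(\rho_s))-i[\tilde{\bm{H}}_B,\mathcal{K}_1(\rho_s)]-\mathcal{K}_1(-i[\tilde{\bm{H}}_B,\rho_s])=0$ for all $\rho_s$.

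First I would use the tensor structure of (\ref{eq:systemmanyfast}) to decouple this identity over $k$. Because $\mathcal{L}_A^{(k)}$ acts on $\mathcal{H}_A^{(k)}$ only and $\mathcal{L}_A^{(k)}(\rhoak)=0$, in $\mathcal{L}_A(\mathcal{K}_1(\rho_s))=\sum_k\mathcal{L}_A^{(k)}(\mathcal{K}_1(\rho_s))$ every ``spectator'' factor $\bar{\rho}_A^{(j)}$, $j\neq k$, is annihilated, so that $\mathcal{L}_A^{(k)}$ sees only the part of $\mathcal{K}_1$ built from $\bm{M}^{(k)}$; likewise $\mathcal{L}_{int}(\mathcal{K}_0(\rho_s))=\sum_k\mathcal{L}_{int}^{(k)}(\rhoa\otimes\rho_s)$ already splits over $k$. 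Within each $k$ I would then sort every surviving contribution by its ``signature'' on $\mathcal{H}_B$, i.e.\ by whether its $\mathcal{H}_B$-factor is $\bm{B}^\dag\rho_s$, $\rho_s\bm{B}^\dag$, $\bm{B}\rho_s$ or $\rho_s\bm{B}$; since the identity must hold for every $\rho_s$ and the four corresponding superoperators on $\mathcal{H}_B$ are linearly independent, it splits into one operator identity on $\mathcal{H}_A^{(k)}$ per signature and per $k$, and the whole computation decouples into $K$ independent single-fast problems.

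The key step, which I expect to be the main obstacle, is the handling of the two Hamiltonian commutators, since this is the single point where Assumption~\ref{assumption:cbassumption} is used and it requires carrying the commutator algebra correctly through all four signatures. From (\ref{eq:cBproperty}) and its adjoint $[\tilde{\bm{H}}_B,\bm{B}]=-c_{B^\dag}\bm{B}$ (legitimate because $c_{B^\dag}\in\mathbb{R}$) one gets $[\tilde{\bm{H}}_B,\bm{B}^\dag\rho_s]=c_{B^\dag}\bm{B}^\dag\rho_s+\bm{B}^\dag[\tilde{\bm{H}}_B,\rho_s]$, and analogous identities for the other three signatures. Substituting these, the $[\tilde{\bm{H}}_B,\rho_s]$ pieces produced by $-i[\tilde{\bm{H}}_B,\mathcal{K}_1(\rho_s)]$ cancel exactly against those produced by $\mathcal{K}_1(-i[\tilde{\bm{H}}_B,\rho_s])$, and what is left on the $\bm{B}^\dag\rho_s$ signature is the scalar term $-c_{B^\dag}\,(\bm{F}_1^{(k)}\rhoak)\otimes(\bm{B}^\dag\rho_s)$, with a symmetric $+c_{B^\dag}$ term on the $\bm{B}\rho_s$ signature. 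Collecting the $\bm{B}^\dag\rho_s$ coefficient (and multiplying through by $i$) then gives precisely (\ref{eq:singleFastFirstOrderSelectionOneMany1}), the $\bm{B}\rho_s$ coefficient gives (\ref{eq:singleFastFirstOrderSelectionOneMany2}), and the $\rho_s\bm{B}^\dag$ and $\rho_s\bm{B}$ coefficients reproduce the Hermitian conjugates of these two equations, which hold automatically since $\mathcal{L}_A^{(k)}$ intertwines with Hermitian conjugation and $\rhoak$ is Hermitian. This cancellation is exactly what lets the quadratic Kraus ansatz of \cite{azouitQST2017} survive the presence of a non-trivial $\tilde{\bm{H}}_B$; beyond it only routine bookkeeping remains.

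It then remains to justify uniqueness of the $\bm{F}_i^{(k)}\rhoak$ and the CPTP claim. Equation (\ref{eq:singleFastFirstOrderSelectionOneMany1}) reads $(\mathcal{L}_A^{(k)}-i c_{B^\dag}\,\mathrm{id})(\bm{F}_1^{(k)}\rhoak)=-\bm{A}^{(k)}\rhoak$, and (\ref{eq:singleFastFirstOrderSelectionOneMany2}) likewise with $\mathcal{L}_A^{(k)}+i c_{B^\dag}^\ast\,\mathrm{id}$; since each fast part relaxes to a \emph{unique} steady state, the only eigenvalue of $\mathcal{L}_A^{(k)}$ on the imaginary axis is $0$, hence these shifted superoperators are invertible whenever $c_{B^\dag}\neq0$, while for $c_{B^\dag}=0$ one recovers the solvability condition $\trace(\bm{A}^{(k)}\rhoak)=0$ of \cite{azouitQST2017} (which may be imposed by a harmless constant shift of $\bm{A}^{(k)}$), the solution then being unique up to an additive multiple of $\rhoak$ that does not alter $\mathcal{L}_{s,1}$. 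Finally, the CPTP property follows from (\ref{eq:kmap10}): $\mathcal{K}_0(\rho_s)+\varepsilon\mathcal{K}_1(\rho_s)$ coincides with the genuine (completely positive) Kraus map $(\bm{I}-i\varepsilon\bm{M})(\rhoa\otimes\rho_s)(\bm{I}+i\varepsilon\bm{M}^\dag)$ up to the order-$\varepsilon^2$ term $\varepsilon^2\,\bm{M}(\rhoa\otimes\rho_s)\bm{M}^\dag$, so it is completely positive up to second order; and $\trace(\mathcal{K}_0(\rho_s))=\trace(\rhoa)\,\trace(\rho_s)=\trace(\rho_s)$ together with $\trace(\mathcal{K}_1(\rho_s))=0$ (by cyclicity of the trace) shows that the trace is preserved exactly at this order.
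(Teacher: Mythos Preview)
Your approach is essentially the paper's: substitute the ansatz, use Assumption~\ref{assumption:cbassumption} to reduce the two $\tilde{\bm H}_B$-commutator terms, split by $\mathcal{H}_B$-signature and by $k$, and read off (\ref{eq:singleFastFirstOrderSelectionOneMany}); the paper writes this out as the expanded identity (\ref{eq:firstorderexpandedmany}) and then sets the two parentheses to zero, exactly as you describe.

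There is one small gap. Your trace-preservation argument ``$\trace(\mathcal{K}_1(\rho_s))=0$ by cyclicity of the trace'' is not valid as stated: from (\ref{eq:kmap10}) the first-order piece is $\mathcal{K}_1(\rho_s)=-i\bm{M}(\rhoa\otimes\rho_s)+i(\rhoa\otimes\rho_s)\bm{M}^\dag$ (note the $\bm{M}^\dag$; the $\bm{M}$ in (\ref{eq:K1shape00}) is a typo), so cyclicity only gives $\trace\mathcal{K}_1(\rho_s)=2\,\Im\trace\!\big(\bm{M}(\rhoa\otimes\rho_s)\big)$, which is not automatically zero since $\bm{M}$ is not Hermitian. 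The paper closes this by taking traces of (\ref{eq:singleFastFirstOrderSelectionOneMany}) to obtain the relation (\ref{eq:conditiontrace1many}), namely $c_{B^\dag}\trace(\bm{F}_1^{(k)}\rhoak)=c_{B^\dag}\trace(\bm{F}_2^{(k)}\rhoak)^\ast=-i\trace(\bm{A}^{(k)}\rhoak)$, from which $\trace\!\big(\bm{M}(\rhoa\otimes\rho_s)\big)\in\mathbb{R}$ and hence $\trace\mathcal{K}_1(\rho_s)=0$ follow. You should insert this step. (A secondary nitpick: your linear-independence claim for the four $\mathcal{H}_B$-signatures is stronger than needed and can fail in degenerate cases; the paper, more safely, only uses that zeroing each coefficient is \emph{sufficient}.)
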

\begin{proof}
see Appendix \ref{ssec:manyFastFirstOrder}.
\end{proof}
\begin{remark}
\label{remark:firstorder}
The first-order (\ref{eq:firstorderinvariance}) is also satisfied by
the Lindbladian $\mathcal{L}_{s,1}(\rho_s) = \mathcal{L}_B(\rho_s) -i \left[ \bm{H}_{s,1} , \rho_s \right]$
 with $\bm{H}_{s,1} := \sum_{k} \trace \left( \bm{A}^{(k)} \bar{\rho}_A^{(k)} \right)  \bm{B}^\dag + \trace \left( \bm{A}^{(k)\,\dag} \bar{\rho}_A^{(k)} \right)  \bm{B}$,
and by a map $\mathcal{K}_1$ of the form (\ref{eq:K1shape00}) 
 where $\bm{F}_1^{(k)},\bm{F}_2^{(k)}$ respectively are the unique solutions of:
\begin{subequations} 
\begin{align}
 \mathcal{L}_A^{(k)}\left( \bm{F}_1^{(k)} \bar{\rho}_A^{(k)}  \right)  + \mathcal{S}^{(k)}(  \bm{A}^{(k)}  \bar{\rho}_A^{(k)})    -i \, c_{B^\dag}\, \bm{F}_1^{(k)} \bar{\rho}_A^{(k)} = & 0, \label{eq:singleFastFirstOrderSelectionTwoMany1} \\
 \mathcal{L}_A^{(k)}\left( \bm{F}_2^{(k)} \bar{\rho}_A^{(k)}  \right)  + \mathcal{S}^{(k)}(  \bm{A}^{(k)\,\dag}  \bar{\rho}_A^{(k)}) + i \, c_{B^\dag}^\ast \, \bm{F}_2^{(k)} \bar{\rho}_A^{(k)} = & 0 \label{eq:singleFastFirstOrderSelectionTwoMany2} ,
\end{align} \label{eq:singleFastFirstOrderSelectionTwoMany} 
\end{subequations}

{\parskip = -2mm
\noindent
where, for an operator $Q$ acting on $\mathcal{H}_A^{(k)}$, notation $\mathcal{S}^{(k)}(Q)$
denotes $Q-\trace(Q)\bar{\rho}_A^{(k)}$.
Furthermore, $\mathcal{K}(\rho_s) = \mathcal{K}_0(\rho_s) + \varepsilon \mathcal{K}_1(\rho_s)$ is a CPTP map up to second-order terms.
The possibility of having alternative solutions to the first-order invariance equation hinges upon having
$c_{B^\dag} \neq 0$, by means of which a gauge
degree of freedom in the selection of the trace of terms $\bm{F}_1^{(k)} \bar{\rho}_A^{(k)}$ and $\bm{F}_2^{(k)}\bar{\rho}_A^{(k)}$
can be tuned so as to cancel out $\bm{H}_{s,1}$, as in Theorem \ref{th:manyFastFirstOrder}.
It appears that gauge choices are instrumental for positivity-preservation in the solution of the second-order invariance equation, as we consider next. }
\end{remark}
\vspace{2mm}


\begin{theorem}
\label{th:manyFastSecondOrder}
Consider model (\ref{eq:systemmanyfast}). 
Let Assumption \ref{assumption:cbassumption} hold.
Assume furthermore that $\mathcal{L}_{s,1}=\mathcal{L}_B=0$, with $\mathcal{K}_1$ selected according to
Theorem \ref{th:manyFastFirstOrder}.
Then, the second-order invariance equation is satisfied by
a Lindbladian:
{\small
\begin{align}
 \mathcal{L}_{s,2}(\rho_s) =  \sum_{k} & -i \,  \Im \left( z_1^{(k)} \right) \left[ \bm{B} \bm{B}^\dag , \rho_s \right]  
			         -i \,  \Im\left(z_2^{(k)}\right) \left[ \bm{B}^\dag \bm{B} , \rho_s \right]  \nonumber \\
& 				 +2 \,  \Re\left(z_1^{(k)}\right) \mathcal{D}_{\bm{B}^\dag}(\rho_s)
				 +2 \,  \Re\left(z_2^{(k)}\right) \mathcal{D}_{\bm{B}}(\rho_s)  \nonumber \\
& + \sum_{k>k^\prime} \Big\{ -i \; \delta^{(k,k^\prime)} \;
                                         \left[ \left[ \bm{B}, \bm{B}^\dag \right]  , \rho_s \right] 
                                         \Big\} , \label{eq:Ls2many}
\end{align}}

{\parskip = -2mm
\noindent
with:}
{\small \begin{subequations}
\begin{align}
\delta^{(k,k^\prime)} & = \frac{-2 \Re  \left( z_0^{(k)} z_0^{(k^\prime)\, \ast } \right) }{c_{B^\dag}} \label{eq:definitionsdelta} \; , \quad
z_0^{(k)} = \trace\left( \bm{A}^{(k)} \bar{\rho}_A^{(k)} \right) , \\
z_1^{(k)} & = \trace\left( \bm{F}_1^{(k)} \bar{\rho}_A^{(k)} \bm{A}^{(k)\,\dag} \right) ,\quad
z_2^{(k)} = \trace \left( \bm{F}_2^{(k)} \bar{\rho}_A^{(k)} \bm{A}^{(k)} \right) \label{eq:definitionsz2} ,
\end{align} \label{eq:definitionsz}
\end{subequations}}

{\parskip = -2mm
\noindent
and by a map $\mathcal{K}_2$, obtained from formulas (\ref{eq:kmap2many})-(\ref{eq:Uequationall}), such that $\mathcal{K}(\rho_s) = \mathcal{K}_0(\rho_s) + \varepsilon \mathcal{K}_1(\rho_s)
+ \varepsilon^2 \mathcal{K}_2(\rho_s)$ is a CPTP map up to third-order terms.}
\end{theorem}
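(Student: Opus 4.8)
\emph{Proof strategy.}
The plan is to verify the second-order invariance equation~(\ref{eq:secondorderinvariance}) directly by inserting the data already fixed at lower orders. With $\mathcal{L}_{s,1}=\mathcal{L}_B=0$, $\mathcal{K}_0(\rho_s)=\bar{\rho}_A\otimes\rho_s$ and $\mathcal{L}_{s,0}(\rho_s)=-i[\tilde{\bm{H}}_B,\rho_s]$, equation~(\ref{eq:secondorderinvariance}) collapses to
\[
\mathcal{L}_A(\mathcal{K}_2(\rho_s))-i\big[\tilde{\bm{H}}_B,\mathcal{K}_2(\rho_s)\big]+i\,\mathcal{K}_2\big([\tilde{\bm{H}}_B,\rho_s]\big)+\mathcal{L}_{int}(\mathcal{K}_1(\rho_s))=\bar{\rho}_A\otimes\mathcal{L}_{s,2}(\rho_s),
\]
with $\mathcal{K}_1$ taken from~(\ref{eq:K1shape00})--(\ref{eq:singleFastFirstOrderSelectionOneMany}). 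I would then proceed in three steps: (i) read off $\mathcal{L}_{s,2}$ by taking the partial trace over $\mathcal{H}_A$; (ii) solve the remaining ``transverse'' part of the identity for $\mathcal{K}_2$; and (iii) check the CPTP property.

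\emph{Step (i): the reduced Lindbladian.}
Apply $\trace_A$ to the displayed identity. Since $\trace_A\circ\mathcal{L}_A=0$, since $\tilde{\bm{H}}_B$ acts on $\mathcal{H}_B$ only, and since $\trace(\bar{\rho}_A)=1$, the two $\mathcal{K}_2$-commutator terms cancel against each other as soon as $\trace_A\mathcal{K}_2$ is of the form $\sum_j[\bm{X}_j,\,\cdot\,]$ with each $\bm{X}_j$ commuting with $\tilde{\bm{H}}_B$; the ansatz~(\ref{eq:kmap2many})--(\ref{eq:Uequationall}) is built to have this property, exploiting that Assumption~\ref{assumption:cbassumption} (with $c_{B^\dag}\in\mathbb{R}$) gives $[\tilde{\bm{H}}_B,\bm{B}]=-c_{B^\dag}\bm{B}$ by Hermitian conjugation and hence $[\tilde{\bm{H}}_B,\bm{B}\bm{B}^\dag]=[\tilde{\bm{H}}_B,\bm{B}^\dag\bm{B}]=0$. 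This leaves $\mathcal{L}_{s,2}(\rho_s)=\trace_A\big(\mathcal{L}_{int}(\mathcal{K}_1(\rho_s))\big)$, which I would expand by substituting $\bm{M}=\sum_{k}\bm{F}_1^{(k)}\otimes\bm{B}^\dag+\bm{F}_2^{(k)}\otimes\bm{B}$ and $\mathcal{L}_{int}=\sum_{k}-i[\bm{A}^{(k)}\otimes\bm{B}^\dag+\bm{A}^{(k)\,\dag}\otimes\bm{B},\,\cdot\,]$. The resulting double sum over the $\mathcal{L}_{int}$-index $k$ and the $\bm{M}$-index $k'$ splits into a diagonal part ($k=k'$) and a cross part ($k\neq k'$). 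For $k=k'$ the $\mathcal{H}_A$-trace produces exactly $z_1^{(k)}$ and $z_2^{(k)}$ of~(\ref{eq:definitionsz2}); collecting the Hermitian and anti-Hermitian contributions then reassembles the dissipators $\mathcal{D}_{\bm{B}^\dag},\mathcal{D}_{\bm{B}}$ with weights $2\Re(z_1^{(k)}),2\Re(z_2^{(k)})$ and the Hamiltonian terms $-i[\bm{B}\bm{B}^\dag,\rho_s],-i[\bm{B}^\dag\bm{B},\rho_s]$ with weights $\Im(z_1^{(k)}),\Im(z_2^{(k)})$, i.e. the first two lines of~(\ref{eq:Ls2many}). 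For $k\neq k'$ the $\mathcal{H}_A^{(k')}$-factor is some $\bm{F}_i^{(k')}$, and its trace against $\bar{\rho}_A^{(k')}$ I would obtain by taking the $\mathcal{H}_A^{(k')}$-trace of the first-order selection equations~(\ref{eq:singleFastFirstOrderSelectionOneMany1})--(\ref{eq:singleFastFirstOrderSelectionOneMany2}): using $\trace\mathcal{L}_A^{(k')}=0$ these give $\trace(\bm{F}_1^{(k')}\bar{\rho}_A^{(k')})=-i\,z_0^{(k')}/c_{B^\dag}$ and $\trace(\bm{F}_2^{(k')}\bar{\rho}_A^{(k')})=i\,z_0^{(k')\,\ast}/c_{B^\dag}^\ast$, while the $\mathcal{H}_A^{(k)}$-trace contributes $z_0^{(k)}$. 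Symmetrizing the residual sum over $k\leftrightarrow k'$ and combining the $\bm{B}^\dag$- and $\bm{B}$-channels then collapses everything onto the single commutator $[[\bm{B},\bm{B}^\dag],\rho_s]$ with coefficient $\delta^{(k,k')}=-2\Re(z_0^{(k)}z_0^{(k')\,\ast})/c_{B^\dag}$; this is the origin both of the $c_{B^\dag}$ in the denominator of~(\ref{eq:definitionsdelta}) and of the last line of~(\ref{eq:Ls2many}).

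\emph{Steps (ii)--(iii): the Kraus map and the CPTP property.}
With $\mathcal{L}_{s,2}$ now fixed, the remainder of the displayed identity lies in the range of $\mathcal{L}_A$ on the subspace transverse to $\bar{\rho}_A$. Following the first-order pattern, I would posit for $\mathcal{K}_2$ the ``sandwich plus one-sided correction'' ansatz of~(\ref{eq:kmap2many})--(\ref{eq:Uequationall}) --- namely the term $\bm{M}(\bar{\rho}_A\otimes\rho_s)\bm{M}^\dag$, one-sided terms $\bm{U}(\bar{\rho}_A\otimes\rho_s)$ plus their Hermitian conjugates, and diagonal pieces --- and determine the auxiliary $\mathcal{H}_A$-operators from Sylvester-type equations of the form $\mathcal{L}_A^{(k)}(\,\cdot\,)+\cdots=0$ analogous to~(\ref{eq:singleFastFirstOrderSelectionOneMany}). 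The two extra terms $-i[\tilde{\bm{H}}_B,\mathcal{K}_2(\rho_s)]$ and $i\,\mathcal{K}_2([\tilde{\bm{H}}_B,\rho_s])$ are reduced, channel by channel in the $\bm{B}$-content, to scalar shifts that are integer multiples of $i\,c_{B^\dag}$ thanks to Assumption~\ref{assumption:cbassumption}; this is exactly what keeps the corresponding shifted operators $\mathcal{L}_A^{(k)}-n\,i\,c_{B^\dag}$ invertible (on the full space when $n\neq0$, on the transverse subspace when $n=0$), so that $\mathcal{K}_2$ is uniquely determined. Trace preservation $\trace\mathcal{K}_2(\rho_s)=0$ follows by taking the full trace $\trace_A\trace_B$ of the displayed identity, since the traces of commutators, of $\mathcal{L}_A(\,\cdot\,)$, and of the Lindbladian $\mathcal{L}_{s,2}(\rho_s)$ all vanish. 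Complete positivity up to $\mathcal{O}(\varepsilon^3)$ follows by reorganizing $\mathcal{K}_0+\varepsilon\mathcal{K}_1+\varepsilon^2\mathcal{K}_2$ in the form $(\bm{I}-i\varepsilon\bm{M}-\varepsilon^2\bm{N})(\bar{\rho}_A\otimes\rho_s)(\bm{I}-i\varepsilon\bm{M}-\varepsilon^2\bm{N})^\dag+\varepsilon^2\sum_{\ell}\bm{C}_\ell(\bar{\rho}_A\otimes\rho_s)\bm{C}_\ell^\dag+\mathcal{O}(\varepsilon^3)$ for suitable operators $\bm{N}$ and $\bm{C}_\ell$, which is possible precisely because the gauge freedom flagged in Remark~\ref{remark:firstorder} has been used (already at first order) to make the non-sandwich part of $\mathcal{K}_2$ a manifestly completely positive, dissipator-like object.

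\emph{Main obstacle.}
The substantive difficulty is Step~(ii): actually solving the second-order invariance equation for $\mathcal{K}_2$ while simultaneously (a) bookkeeping the fast-Hamiltonian commutators through Assumption~\ref{assumption:cbassumption} in every $\bm{B}$-channel, (b) keeping the diagonal contributions and the $k\neq k'$ cross-contributions separated throughout the double sum, and (c) choosing the ansatz and the residual gauge so that the result is completely positive to the claimed order, not merely a solution of the linear invariance equation. By contrast, obtaining $\mathcal{L}_{s,2}$ in Step~(i) is a careful but essentially mechanical trace computation once the first-order selection equations are invoked; it is pinning down a genuinely CPTP $\mathcal{K}_2$ that requires the gauge choices of Theorem~\ref{th:manyFastFirstOrder} and Remark~\ref{remark:firstorder} to be carried along as hypotheses.
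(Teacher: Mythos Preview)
Your overall architecture matches the paper's, but Step~(i) contains a genuine error that also undermines your trace-preservation argument. You claim that after partial trace over $\mathcal{H}_A$ the two terms $-i[\tilde{\bm{H}}_B,\mathcal{K}_2(\rho_s)]$ and $i\,\mathcal{K}_2([\tilde{\bm{H}}_B,\rho_s])$ cancel because ``$\trace_A\mathcal{K}_2$ is of the form $\sum_j[\bm{X}_j,\cdot]$ with each $\bm{X}_j$ commuting with $\tilde{\bm{H}}_B$.'' This is false for the ansatz~(\ref{eq:kmap2many})--(\ref{eq:Uequationall}): the partial trace $\trace_A\mathcal{K}_2(\rho_s)$ contains terms $\bm{B}_j^\dag\bm{B}_h\rho_s$ and $\bm{B}_h\rho_s\bm{B}_j^\dag$ with $h\neq j$ and nonzero coefficients (for instance $\trace(\bm{U}_{B_j^\dag B_h}^{(k,k')}\bar{\rho}_A^{(k,k')})$ is fixed nonzero by~(\ref{eq:Uequation2}), and the $b^{(k)}$ pieces of $\mathcal{K}_2^Q$ are likewise nonzero). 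By~(\ref{eq:brhobproperty}) such terms do \emph{not} commute with $\mathrm{ad}_{\tilde{\bm{H}}_B}$. Consequently $\mathcal{L}_{s,2}(\rho_s)\neq\trace_A(\mathcal{L}_{int}(\mathcal{K}_1(\rho_s)))$: the right-hand side contains spurious $\bm{B}^\dag\bm{B}^\dag\rho_s$, $\bm{B}\bm{B}\rho_s$, $\bm{B}\rho_s\bm{B}$, $\bm{B}^\dag\rho_s\bm{B}^\dag$ contributions that are cancelled \emph{precisely} by the $\tilde{\bm{H}}_B$-commutator of $\mathcal{K}_2$, via the gauge choices on $\trace(\bm{U}_{\cdot}^{(k,k')}\bar{\rho}_A^{(k,k')})$ and on $b^{(k)}$. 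In the paper this is not a shortcut but the mechanism: one first writes the full left-hand side including the $\tilde{\bm{H}}_B$-commutator of $\mathcal{K}_2$ (equation~(\ref{eq:almostLs2})), then uses those gauge conditions to kill the off-channel terms upon taking $\trace_A$, and only then reads off~(\ref{eq:Ls2many}).

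Your trace-preservation argument has the analogous flaw. Taking the full trace of the invariance identity yields only $\trace\mathcal{K}_2([\tilde{\bm{H}}_B,\rho_s])=0$, i.e.\ $\trace\circ\mathcal{K}_2$ vanishes on the range of $\mathrm{ad}_{\tilde{\bm{H}}_B}$; this is a proper subspace (it misses every $\rho_s$ commuting with $\tilde{\bm{H}}_B$), so $\trace\mathcal{K}_2(\rho_s)=0$ does not follow. The paper instead verifies it by direct computation from the explicit formulas, and this is exactly where the remaining gauge freedom --- the traces of $\bm{U}_{B^\dag B},\bm{U}_{BB^\dag}$ in~(\ref{eq:Uequation3})--(\ref{eq:Uequation4}) and the constants $f_1,f_2$ --- is spent. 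You correctly flag complete positivity of $\mathcal{K}_2$ as the main obstacle, but the gauge choices are not merely a CP device: they are simultaneously what makes $\mathcal{L}_{s,2}$ come out in Lindblad form and what makes $\mathcal{K}_2$ traceless, so Steps~(i), (ii) and~(iii) cannot be decoupled in the way your outline suggests.
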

\begin{proof}
see Appendix \ref{ssec:manyFastSecondOrder}.
\end{proof}


 \section{Application}
\label{sec:application}

\begin{figure*}[htp]
\begin{center}

\hspace{0mm}
\includegraphics[width=0.48\textwidth,trim=0 0mm 0mm 0mm]{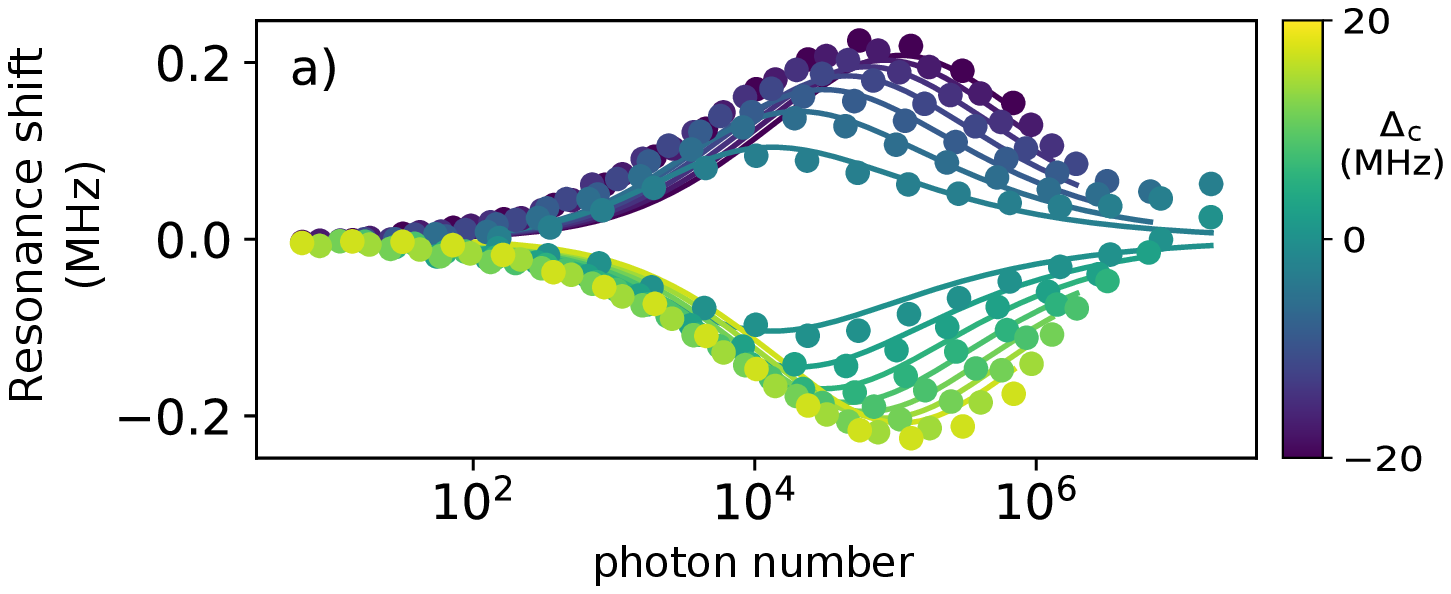}
\includegraphics[width=0.436\textwidth,trim=0 0 0mm 0]{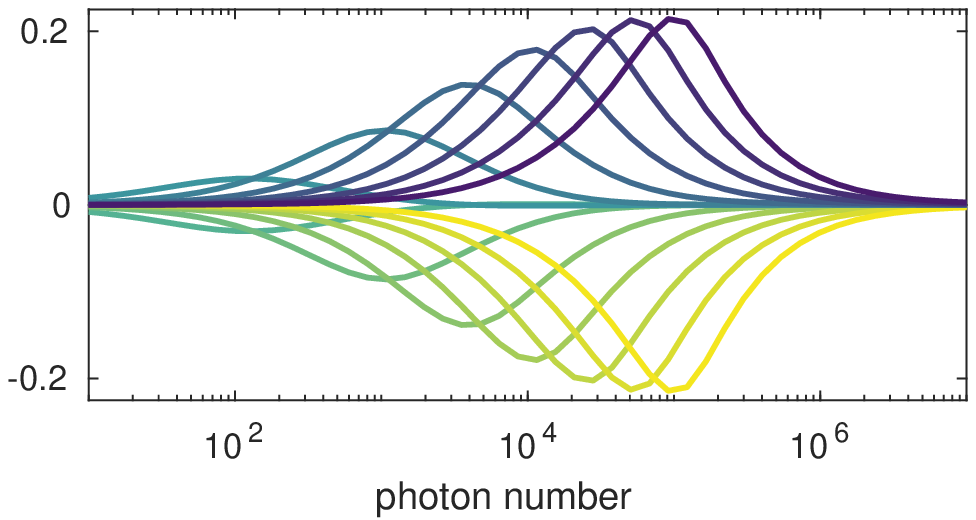}    
\caption{Section \ref{sec:application}. 
Shift of the resonator frequency: probe experiment (above left) versus reduced-order model
(\ref{eq:applicationreducedordermodel})
(above right) as a function of intraresonator photons $\left< \bm{N} \right>$ and for different pump detunings $\Delta_c$.
Parameter values in reduced-order model
(\ref{eq:applicationreducedordermodel}):
$g = \SI{30}{\kilo\hertz}$, $\Gamma_- = \SI{10}{\mega\hertz}$, $\Delta_c$ ranges from $-\SI{20}{\mega\hertz}$
to $\SI{20}{\mega\hertz}$, $\Delta_q^{(k)}$ is uniformly distribuited over $k$ in the range
$[-\SI{100}{\mega\hertz},\SI{100}{\mega\hertz}]$, and 
$\left< \bm{N} \right> = \tilde{v}^2/\Delta_c^2$ where $\tilde{v}$ is in the range from $0$ to $\SI{10}{\giga\hertz}$.}
\label{fig:application}
\end{center}
\end{figure*}

Microwave superconducting resonators are an important component in various quantum devices, and in particular
in the quantum electrodynamics circuits \cite{blais2004,rau2004} that are one of the most promising current technologies towards building a quantum computer \cite{industrials}. Losses due to imperfections in amorphous materials constitute a dominant loss channel of such resonators \cite{wang2009,gao2008}, and can be represented by a bath of two-level systems (TLSs). In many practical cases, strong microwave tones are applied with significant frequency detuning with respect to the resonance frequency \cite{regalnature} in order to activate a parametric interaction between the resonator mode and another circuit degree of freedom. Within this framework, the LKB team has performed a pump probe experiment \cite{capellearxiv} on a microwave resonator:
a strong ``pump'' drive, at a frequency far detuned from the resonator, is applied to essentially scramble the quantum behavior (``saturate'') of the TLS bath, whereas a weak probe tone, assumed not to disturb the bath behavior, is used to retrieve the transmission spectrum of the resonator. The latter allows to extract induced detuning and damping rate.


Let $\mathcal{H}_Q^{(k)}$ and $\mathcal{H}_C$ respectively be the Hilbert space of the $k$-th TLS=qubit and the resonator, and $\mathcal{H}_Q := \bigotimes_{k}{\mathcal{H}_Q^{(k)}}$.
Respectively denote with $\bm{\sigma}_+^{(k)}$ and $\bm{\sigma}_-^{(k)}$ the raising and lowering operator on
the $k$-th qubit, and with $\bm{\sigma}_x^{(k)}$, $\bm{\sigma}_y^{(k)}$, and $\bm{\sigma}_z^{(k)}$ 
the Pauli operators on the $k$-th qubit. Let $\bm{a}$ and $\bm{a}^\dag$ be the annihilation and creation
operators in the resonator mode. The experimental setup is modeled by the following system in Lindblad form:
{\small \begin{eqnarray*}
 \tfrac{d}{dt}\tilde{\rho} & = & -i [\mathbf{H},\tilde{\rho}] + \Gamma_- \, \sum_{k}{ \mathcal{D}_{\bm{\sigma}_{-}^{(k)}}(\tilde{\rho}) }
 \label{eq:sys} ,\\
\mathbf{H} &=& \omega_c \, \mathbf{a}^\dag\mathbf{a} +
		\left( v e^{i \omega_r t} + v^\ast e^{-i \omega_r t} \right) \left( \bm{a}^\dag + \bm{a} \right) \\ && +
		\sum_{k} \Big( \frac{\omega_q^{(k)}}{2} \bm{\sigma}_z^{(k)}
                  + i \, g  \, \bm{\sigma}_x^{(k)} \left(  \mathbf{a}^\dag - \mathbf{a} \right)   \Big)  \; .
\end{eqnarray*}}
Here $\omega_c$, $\omega_r$, and $\omega_q^{(k)}$ are the frequencies of the resonator, the pump drive, and the $k$-th qubit respectively, $v$ is the amplitude of the pump, $g$ is the coupling strength between the resonator 
and each qubit, and $\Gamma_-$ is the dissipation rate associated to $\bm{\sigma}_-^{(k)}$ on each qubit.
The goal would be to obtain a reduced order model for (\ref{eq:sys}) which matches the transmission spectrum of this experiment,
as in Figure \ref{fig:application}.

For each $k$, let 
$\Delta_q^{(k)} = \omega_q^{(k)} - \omega_r$ and $\Delta_c = \omega_c - \omega_r$.
Under the assumption that
$ \omega_q^{(k)} ,\,  \omega_c,\,  \omega_r \; \gg \; | \Delta_q^{(k^\prime)}|, \, | \Delta_c|,  \, g, \, \Gamma_- , \, \Gamma_+$
for any $k,k^\prime$,
we apply the standard rotating-wave approximation (i.e. first-order averaging) with
$\bm{H}_0 := \omega_r \, \mathbf{a}^\dag\mathbf{a} + \sum_{k}{\frac{\omega_r}{2} \bm{\sigma}_z^{(k)}} $ the Hamiltonian corresponding to the rotating change of frame and
$\bm{H}_1 = \bm{H}-\bm{H}_0$
the remaining Hamiltonian. The first-order RWA yields:
\begin{equation}
 \tfrac{d}{dt}{\rho}_1^{rwa} = -i \left[ \bm{H}_1^{rwa} , \, \rho_1^{rwa} \right]  + \Gamma_- \, \sum_{k}{\mathcal{D}_{\bm{\sigma}_-^{(k)}}(\rho_1^{rwa})}  , \nonumber 
\end{equation}
where 
$
 \bm{H}_1^{rwa} = \Delta_c \, \mathbf{a}^\dag\mathbf{a}
 + (v \bm{a} + v^\ast \bm{a}^\dag) + \sum_{k}{\Big( \frac{\Delta_q^{(k)}}{2} \bm{\sigma}_z^{(k)}
 + i g \left( \bm{\sigma}_-^{(k)} \bm{a}^\dag - \bm{\sigma}_+^{(k)} \bm{a} \right) \Big) }$
is the Jaynes-Cummings interaction Hamiltonian plus drive. We next apply a unitary coordinate change
$ \rho = \tilde{\bm{U}}\, \rho_1^{rwa} \, {\tilde{\bm{U}}}^\dag$ on the resonator state, to center it around its well-known steady state under off-resonant drive, namely by a complex field amplitude displacement $\tilde{\bm{U}}:= \exp \left(  \left( v^\ast \bm{a}^\dag - v \bm{a} \right)/\Delta_c \right)$. This yields:
\begin{equation}
 \tfrac{d}{dt} \rho =  \sum_{k}{ \Big\{ \mathcal{L}_Q^{(k)}(\rho) + g \mathcal{L}_{int}^{(k)}(\rho) \Big\} }    + (-i) \left[ \Delta_c \, \bm{a}^\dag \bm{a} , \; \rho \right]   , \label{eq:fullsys} 
\end{equation}
where:
\begin{align}
 \mathcal{L}_Q^{(k)}(\rho) & := -i \left[ \frac{\Delta_q^{(k)}}{2} \bm{\sigma}_z^{(k)} + \frac{g \,  \tilde{v}}{\Delta_c} \bm{\sigma}_x^{(k)}, \; \rho \right] + \Gamma_- \mathcal{D}_{\bm{\sigma}_-^{(k)}}(\rho) \nonumber , \\
\mathcal{L}_{int}^{(k)}(\rho) & := \left[  \bm{\sigma}_-^{(k)} \bm{a}^\dag - \bm{\sigma}_+^{(k)} \bm{a} , \; \rho \right] . \nonumber 
\end{align}
The term with $v := i \tilde{v}$ for $\tilde{v} \in \mathbb{R}$ now expresses an effective, indirect drive on the TLSs.

System (\ref{eq:fullsys}) is in the form (\ref{eq:systemmanyfast})
with $\mathcal{L}_A^{(k)} := \mathcal{L}_Q^{(k)}$,  $\;\;\bm{A}^{(k)} := i \bm{\sigma}_-^{(k)}$, $\bm{B} := \bm{a}$, $\;\;\tilde{\bm{H}}_{B} := \Delta_c \bm{a}^\dag \bm{a}$. The hypothesis of Theorems \ref{th:manyFastFirstOrder}
and \ref{th:manyFastSecondOrder} are satisfied since $\mathcal{L}_B = 0$ and Assumption \ref{assumption:cbassumption} holds with $c_{B^\dag} = \Delta_c$.
Despite the fact that the microwave resonator in consideration is an infinite-dimensional system and our theory is fully developed for finite-dimensional
ones, we still hope to get useful insights by applying our theory to this setup.
We will later explain how infinite dimensionality affects our results.
For now, let $\mathcal{H}_s$ be a Hilbert space whose dimension matches the dimension of the resonator space $\mathcal{H}_C$, and $\rho_s$ the density operator on $\mathcal{H}_s$. By virtue of Theorem \ref{th:manyFastSecondOrder}, the reduced model is given in Lindbladian form as follows:
 {\small
 \begin{align}
  \tfrac{d}{dt}{\rho}_s = & -i \, \left( \Delta_c + g^2 \sum_{k}{\delta^{(k)}} \right) \, \left[ \bm{a}^\dag \bm{a} , \, \rho_s\right] \label{eq:applicationreducedordermodel} \\
  & \; + 
   g^2  \left( \sum_{k}{   \Gamma_{\bm{a}}^{(k)} } \right) 
    \, \mathcal{D}_{\bm{a}}(\rho_s) + 
    g^2 \left( \sum_{k}{\Gamma_{\bm{a}^\dag}^{(k)}}\right) \, \mathcal{D}_{\bm{a}^\dag}(\rho_s)  ,  \nonumber\\
\delta^{(k)} &=   \Im \left( z_1^{(k)} + z_2^{(k)}  \right),  \nonumber\\
\Gamma_{\bm{a}^\dag}^{(k)} &=  2 \Re \left( z_1^{(k)} \right) \; , \quad  \Gamma_{\bm{a}}^{(k)} = 2 \Re \left( z_2^{(k)} \right) \; , \nonumber\\
z_1^{(k)} &=  \trace \left( -i \bm{F}_1^{(k)} \bar{\rho}_Q^{(k)} \bm{\sigma}_{+}^{(k)} \right) \; , \quad z_2^{(k)} =  \trace \left( i \bm{F}_2^{(k)} \bar{\rho}_Q^{(k)} \bm{\sigma}_{-}^{(k)} \right) \nonumber
\end{align}} and
where, for each $k$, matrices $\bm{F}_1^{(k)}$, $\bm{F}_2^{(k)}$ satisfy
equations (\ref{eq:singleFastFirstOrderSelectionOneMany}).
The solution of such equations can be computed directly since, on a qubit space
$\mathcal{H}_Q^{(k)}$, one can always parameterize operators in terms of Pauli matrices.
We immediately find:
$z_1^{(k)} = \frac{W_1^{(k)}}{Z^{(k)}}$ and $z_2^{(k)} = \frac{W_2^{(k)}}{Z^{(k)}}$ where
{\small
 \begin{align}
 W_1^{(k)} =& -4 g^2 v^2 \big( 8 i g^2 v^2 + \nonumber \\
 & (\Gamma_- + i \Delta_c) \Delta_c
                   \left( (\Gamma_- + 2 i \Delta_c )^2 +4 \Delta_q^2  \right) \big) \nonumber , \\
 W_2^{(k)} =& 32 i g^4 v^4 + 2 i (\Gamma_- - i \Delta_c) \Delta_c^4 (\Gamma_2+4\Delta_q^2) \cdot \nonumber \\
  & \cdot (\Gamma_- - 2 i (\Delta_c + \Delta_q)) - 4 g^2v^2 \Delta_c \big( 
  \Gamma_-^3 - 5 i \Gamma_-^2 \Delta_c  + \nonumber \\
  & 4 i \Delta_c (\Delta_c^2 + 2 \Delta_c \Delta_q - \Delta_q^2)
  + 4 \Gamma_- (-\Delta_c^2+ \Delta_q^2)  
  \big),
 \nonumber \\
  Z^{(k)} =&  \left( 8 g^2v^2+\Delta_c^2 (\Gamma_-^2+4 \Delta_q^2 )  \right) 
  \big( 8 g^2 v^2 ( i \Gamma_- + 2 \Delta_c ) + \nonumber \\
     & \Delta_c^2 (i \Gamma_- + \Delta_c )
      \left(  (\Gamma_- - 2 i \Delta_c)^2 + 4 \Delta_q^2     \right)    
      \big) .
 \nonumber 
 \end{align}}
Coefficients $g^2 \sum_{k}{\delta^{(k)}}$, $g^2 \sum_{k}{   \Gamma_{\bm{a}}^{(k)} }$, and
$g^2  \sum_{k}{\Gamma_{\bm{a}^\dag}^{(k)}}$ appearing in our reduced-order model
(\ref{eq:applicationreducedordermodel}) can be visualized for different values of pump detuning
$\Delta_c$ and intraresonator photon number $\left< \bm{N} \right> = \tilde{v}^2/(\kappa^2+\Delta_c^2)$.
As depicted in Figure \ref{fig:application}, we can compare the frequency shift of the resonator
$g^2 \sum_{k}{\delta^{(k)}}$ with
experimental findings from the pump probe experiment. We find that,
by properly calibrating the values of $g$ and by selecting a proper density function of the 
TLSs, we are able to match the resonance shift in the trasmission spectrum observed in the pump probe experiment.
However, we have observed that for large drive gains $\tilde{v}$ the second-order dissipation gives
$g^2 \sum_{k}{   \Gamma_{\bm{a}^\dag}^{(k)}} > g^2 \sum_{k}{   \Gamma_{\bm{a}}^{(k)}}$
which would imply that we constantly keep adding energy and the resonator state drifts off to infinity.
Our finite-dimensional treatment can obviously not be trusted in this case.
Anyway, quantitative agreement between data and our model should hence enable to
extract characteristics about the TLS bath, pending other experimental features that will have to be calibrated.

\section{Conclusions}

We have studied adiabiatic elimination for open quantum systems in Lindblad form composed by a target
subsystem
weakly interacting with $K$ strongly dissipative subsystems. The key novel features of our approach
are: (i) the decomposition of the environment into $K$ separately treated subsystems; (ii) the
presence of fast Hamiltonian dynamics on the target system. The time-scale separation between the 
uncoupled dynamics and the interaction allows model reduction via 
geometric singular perturbation theory. We have provided formulas for the first- and second-order 
expansion and shown that the asymptotic expansion of the center manifold retains a physical 
interpretation: (i) the reduced model evolves according to Lindbladian dynamics; (ii) reduced and 
original model are related via Kraus map. Each strongly dissipative subsystem contributes 
linearly to the reduced model at first-order, and does the same at second-order if a specific 
commutation property about the interaction terms holds. 
We have applied our proposed theory to the model of
a microwave superconducting resonator subject to dielectric losses
where the shape of the trasmission spectrum of our reduced-order model 
matches experimental data.
Future work will address: (i) the necessary conditions to satisfy the invariance equation at orders 
higher than two; (ii) a thorough study of the infinite-dimensional case;
(iii) a full generalization of the proposed theory by removing the assumption about the commutator
between the original Hamiltonian dynamics of the target and the interaction terms.

\appendix

\section{Proofs of Theorems}
\label{sec:proofs}

\subsection{Proof of Theorem \ref{th:manyFastFirstOrder}}
\label{ssec:manyFastFirstOrder}
Let $\bar{\rho}_A^{[k]}$ denote $\bigotimes_{k^\prime \neq k}{\bar{\rho}_A^{(k)}} $.
By plugging (\ref{eq:zeroorderK}) and (\ref{eq:K1shape00}) into the first-order invariance condition (\ref{eq:firstorderinvariance})
 and by making use of Assumption \ref{assumption:cbassumption}, condition (\ref{eq:firstorderinvariance}) reads as:
 {\footnotesize
 \begin{align}
& \sum_{k} \bar{\rho}_A^{[k]} \otimes  \Big\{ \nonumber \\
& \Big( -i \mathcal{L}_A^{(k)}\left( \bm{F}_1^{(k)} \bar{\rho}_A^{(k)}  \right)  -i \bm{A}^{(k)} \bar{\rho}_A^{(k)} - \, c_{B^\dag}\, \bm{F}_1^{(k)} \bar{\rho}_A^{(k)} \Big) \otimes \bm{B}^\dag \rho_s  \nonumber \\
&   + \Big( -i \mathcal{L}_A^{(k)}\left( \bm{F}_2^{(k)} \bar{\rho}_A^{(k)}  \right)  -i \bm{A}^{(k)\,\dag} \bar{\rho}_A^{(k)} 
  +  \, c_{B^\dag} \, \bm{F}_2^{(k)} \bar{\rho}_A^{(k)} \Big)  \otimes \bm{B} \rho_s  \nonumber \\
& + \text{\emph{herm.conj.}} \Big\}  \;+\; \Big( \mathcal{L}_B(\rho_s) -  \mathcal{L}_{s,1}(\rho_s)\Big) \otimes \bar{\rho}_A = 0 
 \end{align} \label{eq:firstorderexpandedmany}}
 It can be proved along the lines \cite[Lemma 4]{azouitQST2017} that
equations (\ref{eq:firstorderexpandedmany})
together with trace condition
(\ref{eq:conditiontrace1many}) (or (\ref{eq:conditiontrace2many}))
are always solvable for $\bm{F}_1^{(k)}, \bm{F}_2^{(k)}$.

\emph{Case of Theorem \ref{th:manyFastFirstOrder}}. 
It can be immediately seen
from (\ref{eq:firstorderexpandedmany}) that one can select
$\mathcal{L}_{s,1} := \mathcal{L}_B$ as long as the two round parenthesis 
in (\ref{eq:firstorderexpandedmany}) are set to zero.
Taking the trace on (\ref{eq:singleFastFirstOrderSelectionOneMany1}) and (\ref{eq:singleFastFirstOrderSelectionOneMany2}) yields:
{\small \begin{equation}
 c_{B^\dag} \trace \left( \bm{F}_1^{(k)} \bar{\rho}_A^{(k)} \right) = c_{B^\dag} \trace \left( \bm{F}_2^{(k)} \bar{\rho}_A^{(k)} \right)^\ast = - i \trace \left( \bm{A}^{(k)} \bar{\rho}_A^{(k)} \right) \label{eq:conditiontrace1many},
\end{equation}
which solves the situation with the announced formulas.}

\emph{Case of Remark \ref{remark:firstorder}}. 
By taking the trace on equations (\ref{eq:singleFastFirstOrderSelectionTwoMany}), we observe that, for each $k$:
\begin{equation}
 \trace \left( \bm{F}_1^{(k)} \bar{\rho}_A^{(k)} \right) = \trace \left( \bm{F}_2^{(k)} \bar{\rho}_A^{(k)} \right)^\ast = 0 \label{eq:conditiontrace2many}.
\end{equation}
Then, by taking the partial trace over $\mathcal{H}_A$ in (\ref{eq:firstorderexpandedmany}), we immediately have
$\mathcal{L}_{s,1}(\rho_s) = \mathcal{L}_B(\rho_s) -i [ \bm{H}_{s,1}, \rho_s]$ with $\bm{H}_{s,1}$ as in Theorem \ref{th:manyFastFirstOrder}.
Now, plugging $\mathcal{L}_{s,1}$ in 
(\ref{eq:firstorderexpandedmany}) yields:
 {\footnotesize
 \begin{align}
& - \sum_{k} \bar{\rho}_A^{[k]} \otimes \nonumber \\
& \Big\{   \Big( i \mathcal{L}_A^{(k)} \left( \bm{F}_1^{(k)} \bar{\rho}_A^{(k)}  \right) 
	  + i \mathcal{S}^{(k)}\left( \bm{A}^{(k)} \bar{\rho}_A^{(k)} \right)
	  + \, c_{B^\dag}\, \bm{F}_1^{(k)} \bar{\rho}_A^{(k)} \Big) \otimes \bm{B}^\dag \rho_s  \nonumber \\
&   + \Big( i \mathcal{L}_A^{(k)}\left( \bm{F}_2^{(k)} \bar{\rho}_A^{(k)}  \right) 
    +i \mathcal{S}^{(k)} \left( \bm{A}^{(k)\,\dag} \bar{\rho}_A^{(k)} \right)
    - c_{B^\dag}^\ast\, \bm{F}_2^{(k)} \bar{\rho}_A^{(k)} \Big) \otimes \bm{B} \rho_s  \nonumber  \\
 & + \text{\emph{herm.conj.}} \Big\} = 0  \label{eq:firstorderexpandedbmany}
 \end{align}}
In order to solve (\ref{eq:firstorderexpandedbmany}) it is enough to zero the two round parenthesis of (\ref{eq:firstorderexpandedbmany})
for each $k$ --- see equations (\ref{eq:singleFastFirstOrderSelectionTwoMany}).

It is immediate to see from \eqref{eq:kmap10} that $\mathcal{K}_0(\rho_s) + \varepsilon \mathcal{K}_1(\rho_s)$ is a completely positive map, as long as one can neglect the terms of order $\varepsilon^2$. One concludes that it is also trace-preserving at order $\epsilon$ by checking that $\;\trace \left( \mathcal{K}_1(\rho_s)\right) = 0\;$, thanks to (\ref{eq:conditiontrace1many})
for the case of Theorem \ref{th:manyFastFirstOrder}
and thanks to (\ref{eq:conditiontrace2many}) for the case of Remark \ref{remark:firstorder}.

\subsection{Proof of Theorem \ref{th:manyFastSecondOrder}}
\label{ssec:manyFastSecondOrder}

We prove Theorem \ref{th:manyFastSecondOrder}
in five steps:
\begin{enumerate}
 \item we define the mapping $\mathcal{K}$ up to third-order terms and, by collecting powers of $\varepsilon$,
 we obtain the formulations of $\mathcal{K}_0, \mathcal{K}_1, \mathcal{K}_2$;
 \item we decompose $\mathcal{K}_2$ in three terms and show that they satisfy the second-order
 invariance condition (\ref{eq:secondorderinvariance});
 \item by taking the partial trace w.r.t. $\mathcal{H}_A$ on (\ref{eq:secondorderinvariance}),
 we compute $\mathcal{L}_{s,2}$;
 \item we show that mapping $\mathcal{K}$, according to our definition, is a completely positive mapping;
 \item finally, we show trace-preservation of $\mathcal{K}$ by proving $\trace[\mathcal{K}_2(\rhos)]=0$
 for all $\rhos$. 
\end{enumerate}

Let $\bar{\rho}_A^{[k,k^\prime]}$ denote $\bigotimes_{k^{\prime \prime} \neq k,k^\prime}{\bar{\rho}_A^{(k^{\prime \prime})}} $.
Let $\bar{\rho}_A^{(k,k^\prime)}$ denote $\rhoa^{(k)} \otimes \rhoa^{(k^\prime)}$.
We use the following notation for operators: if the superscript of an operator respectively is
${(k)}$ or ${(k,k^\prime)}$, then it respectively applies to $\mathcal{H}_A^{(k)}$ or $\mathcal{H}_A^{(k)} \otimes \mathcal{H}_A^{(k^\prime)}$ only, possibly leaving the remaining subsystems identical;
superscripts $(k,k)$ or $[k,k]$ are used interchangeably with superscripts $(k)$ and $[k]$ respectively.
Let $\mathcal{S}^{(k,k^\prime)}(Q)$ denote $Q-\trace(Q)\bar{\rho}_A^{(k,k^\prime)}$.
Let $\bm{B}_1 := \bm{B}^\dag, \bm{B}_2 := \bm{B}$, $\bm{A}_1^{(k)} := \bm{A}^{(k)}, \bm{A}_2^{(k)} := \bm{A}^{(k)\, \dag}$.
For any $k,k^\prime$, let $\{ \bm{U}_{B_j^\dag B_h}^{(k,k^\prime)}\}_{h,j \in \{1, 2\}}$ be a family of four operators 
on $\mathcal{H}_A^{(k)} \otimes \mathcal{H}_A^{(k^\prime)}$ only, which we will define in the following.
Define:
{\small
\begin{align}
\bm{M} = & \sum_{k} \; \sum_{j \in \{ 1, 2\}}{ \bm{F}_j^{(k)} \otimes \bm{B}_j} \nonumber \\
\bm{N} := & \sum_{k,k^\prime} \; \sum_{h,j \in \{1, 2\}} \bm{U}_{B_j^\dag B_h}^{(k,k^\prime)} \otimes \bm{B}_j^\dag \bm{B}_h
\nonumber \\
\bm{W}_\mu^{(k)}(t) := &    [\bm{L}_{A,\mu}^{(k)}, \, \bm{F}_1^{(k)}] \otimes \bm{B}^\dag +
			      e^{2 i c_{B^\dag} t}[\bm{L}_{A,\mu}^{(k)}, \, \bm{F}_2^{(k)}] \otimes \bm{B}   \nonumber \\
\mathfrak{w}^{(k)}(t) := & \bm{B}_1 + b^{(k)} e^{-2ic_{B^\dag} t } \bm{B}_2 \nonumber \\
b^{(k)} := & - \frac{1}{2 c_{B^\dag }} \, \trace \left[ \sum_{\mu} [ \bm{L}_{A,\mu}, \, \bm{F}_2^{(k)} ] \, \rhoak  \, [ \bm{L}_{A,\mu}, \, \bm{F}_1^{(k)}]^\dag \right] 
\label{eq:kmap2many} .
\end{align}}
Let $b_1^{(k)} := b^{(k) \, \star}$ and $b_2^{(k)} := b^{(k)}$,
Let $\mathcal{L}_A^{(k,k^\prime)}$ denote the operator $\mathcal{L}_A^{(k)}+\mathcal{L}_A^{(k^\prime)}$.
Let $\delta_{hj} := h-j$. Let $f_1, f_2 \ge 0$ two constants which we will define in the following. Now define:
{\small \begin{align}
\mathfrak{g}(\rhos) := &  \sum_{k}{ \int_{0}^{\pi / (2 c_{B^\dag})} \mathfrak{w}^{(k)}(t) \, \rhos \, \mathfrak{w}^{(k)}(t)^\dag \,\, dt    }  \nonumber \\
\mathfrak{f}(\rhos) := & \sum_{j \in \{1, 2\}}{f_j \bm{B}_j \rhos \bm{B}^\dag_j } \nonumber \\
\mathcal{G}(\rhos) := &  \trace_A \left[ \int_{0}^{\pi / c_{B^\dag}}{ \sum_{k, \mu}  \, \bm{W}_\mu^{(k)}(t) (\rhoa \otimes \rhos) \bm{W}_\mu^{(k)}(t)^\dag \,\, dt } \right] \nonumber \\
   \mathcal{K}_2^Q( \rhos ) := & \int_{0}^{+\infty}{ e^{\mathcal{L}_A(\cdot)t} \Big( \mathcal{S} \big( \sum_{k, \mu}  \,  \bm{W}_\mu^{(k)}(t) (\rhoa\otimes \rhos) \bm{W}_\mu^{(k)}(t)^\dag \big) \Big) \,\, dt     }
          \nonumber \\
 & + \frac{c_{B^\dag} \bar{\tau} }{\pi}   \rhoa \otimes \mathcal{G}(\rhos) + c_{B^\dag} \, \rhoa \otimes \mathfrak{g}(\rhos) + \rhoa \otimes \mathfrak{f}(\rhos)   \nonumber \\
 \mathcal{K}(\rho_s) := &  \left( \bm{I} -i \varepsilon \bm{M} + \varepsilon^2 \bm{N} \right) \left(\bar{\rho}_A \otimes \rho_s \right)
 \left( \bm{I} +i \varepsilon \bm{M}^\dag + \varepsilon^2 \bm{N}^\dag \right)  \nonumber  \\
&  + \varepsilon^2 \mathcal{K}_2^Q( \rhos )   \label{eq:kmap1many} .
\end{align}}
By collecting powers of $\varepsilon$ in (\ref{eq:kmap1many}) and carrying out straightforward computations,
we obtain the formulation $\mathcal{K}(\rho_s) = (\mathcal{K}_0 + \varepsilon \mathcal{K}_1 + \varepsilon^2 \mathcal{K}_2)(\rho_s)$ where:
{\small \begin{align}
 \mathcal{K}_0(\rho_s)  = &  \bar{\rho}_A \otimes \rho_s, \label{eq:K0shapemany} , \\
 \mathcal{K}_1(\rho_s)  = & -i \; \sum_{k^\prime} \,\,  \sum_{j \in \{1, 2\}} \; \bar{\rho}_A^{[k^\prime]} \otimes  \bm{F}_j^{(k^\prime)}   \bar{\rho}_A^{(k^\prime)}   \otimes \bm{B}_j \rhos   \, + \, \hc \label{eq:K1shapemany} , \\
\mathcal{K}_2(\rhos) = & \mathcal{K}_2^L(\rhos) + \mathcal{K}_2^E(\rhos)  + \mathcal{K}_2^Q(\rhos)    \label{eq:K2shapemany} ,
\end{align}}
where: 
{\small \begin{align}
\mathcal{K}_2^L(\rhos )  = & \sum_{k,  k^\prime} \, \sum_{h,j \in \{ 1, 2 \}} \Big\{  \rhoalesskk \otimes \bm{U}_{B_j^\dag B_h}^{(k,k^\prime)} \rhoakk \otimes \bm{B}_j^\dag \bm{B}_h \rhos  \Big\} + \hc , \nonumber \\
 \mathcal{K}_2^E(\rhos ) = & \sum_{k,  k^\prime} \, \sum_{h,j \in \{ 1, 2 \}} \Big\{ \rhoalesskk \otimes \bm{F}_h^{(k)} \rhoakk \bm{F}_j^{(k^\prime)\,^\dag} \otimes \bm{B}_h \rhos \bm{B}_j^\dag \Big\}  \nonumber \\
\mathcal{K}_2^Q(\rho_s)  = & \sum_{k} \Big\{ \sum_{h, j \in \{1, 2\}} \; \Big(   \rhoalesskk \otimes \bar{\mathcal{K}}_{hj}^{(k)}(\rhos) \otimes \bm{B}_h \rhos \bm{B}_j^\dag \Big) \nonumber \\
 & \;\;\;\;\;\; + \sum_{j \in \{1, 2\}} \Big(  \bar{\tau} \,  \trace \left[ \bar{\mathcal{F}}_{jj}(\rhoa)^{(k)}  \right] \rhoa  \otimes \bm{B}_j \rhos \bm{B}_j^\dag    \Big)  \Big\} \nonumber \\
 & + \sum_{k} \,  \rhoa \otimes  \Big\{ \frac{\pi }{2}  \left( \bm{B}^\dag \rhos \bm{B} + |b_2^{(k)}|^2 \bm{B} \rhos \bm{B}^\dag \right) \nonumber \\
 & \;\;\;\;\;\;\;\;\;\;\;  - \sum_{h \neq j} i  \delta_{hj} b_h^{(k)} \bm{B}_h \rhos \bm{B}_j^\dag \Big\} + \nonumber \\
 & + \rhoa \otimes  \left(  f_1 \bm{B}^\dag \rhos \bm{B} + f_2 \bm{B} \rhos \bm{B}^\dag  \right)  ,
			        \label{eq:Kq2expanded} 
\end{align}}
and where:
{\footnotesize
\begin{align}
 \bar{\mathcal{K}}_{h j}^{(k)}(\rhos)  & := \int_{0}^{+\infty} e^{t \, \mathcal{L}_A^{(k)}(\cdot)} \Big( \mathcal{S}^{(k)}
					  \big( \bar{\mathcal{F}}_{h j}(t,\rhoa)^{(k)}\big) \Big) \;\; dt   \nonumber \\
 \bar{\mathcal{F}}_{hj}(t,\rhoa)^{(k)} & := \exp\left( 2 i c_{B^\dag} \delta_{hj}  \, t \right) \, \sum_{\mu}
					    \left[ \bm{L}_{A,\mu}^{(k)}, \, \bm{F}_h^{(k)} \right] \,
					    \rhoak \,
					    \left[ \bm{L}_{A,\mu}^{(k^\prime)}, \, \bm{F}_j^{(k^\prime)} \right] \nonumber \\
\bar{\mathcal{F}}_{hj}(\rhoa)^{(k)} & := \bar{\mathcal{F}}_{hj}(t,\rhoa)^{(k)} \nonumber \\
f_1 & := \sum_{k,k^\prime} \trace \left[ \bm{F}_2^{(k)} \rhoakk \bm{F}_2^{(k^\prime)} + \check{\delta}_{kk^\prime} \left( \bar{\mathcal{F}}_{22}^{(k)}
+ \frac{\pi}{2} | b^{(k)}|^2\right) \right] \nonumber \\
f_2 & := \sum_{k,k^\prime} \trace \left[ \bm{F}_1^{(k)} \rhoakk \bm{F}_1^{(k^\prime)} + \check{\delta}_{kk^\prime} \left( \bar{\mathcal{F}}_{11}^{(k)}
+ \frac{\pi}{2}  \right) \right]  \label{eq:K2defmore} .
\end{align}}
where $\check{\delta}_{kk^\prime}$ denotes the Kronecker delta. For any $k,k^\prime$, let the family of operators $\{ \bm{U}_{B_j^\dag B_h}^{(k,k^\prime)}\}_{h,j \in \{1, 2\}}$
satisfy the following set of equations:
{\small \begin{subequations}
\begin{align}
& \mathcal{L}_A^{(k,k^\prime)} 
\left(  \bm{U}_{B_j^\dag B_h}^{(k,k^\prime)} \, \rhoakk\right) + \mathcal{S}^{(k,k^\prime )}
\Big( 2i c_{B^\dag} \, \delta_{hj}  \, \bm{U}_{B_j^\dag B_h}^{(k,k^\prime)} \, \rhoakk - \nonumber \\
& \;\;\;\; - \bm{A}_j^{(k)\, \dag} \bm{F}_h^{(k^\prime)} \, \rhoakk
\Big) = 0 , \;\;\; \forall h,j \in \{ 1, 2 \} \label{eq:Uequation} ,  \\
& 
\trace \left[ 2i c_{B^\dag} \, \delta_{hj}  \, \bm{U}_{B_j^\dag B_h}^{(k,k^\prime)} \, \rhoakk 
 \right]  = \nonumber \\
 & \;\;\;\;\;\; = \trace \left[ \bm{A}_j^{(k)\, \dag} \bm{F}_h^{(k^\prime)} \, \rhoakk \right] 
  , \;\;\; \forall h \neq j
   \in \{ 1, 2 \} \label{eq:Uequation2} , \\
   & 
\trace \left[ \bm{U}_{B B^\dag}^{(k,k^\prime)} \rhoakk \right]  =
    - \trace \left[ \bm{F}_1^{(k)} \rhoakk \bm{F}_1^{(k^\prime)\, \dag } + \check{\delta}_{k k^\prime} \left( \bar{\tau} \bar{\mathcal{F}}_{1 1}(\rhoa)^{(k)} + \frac{\pi}{2} \right) \right] , \label{eq:Uequation3} \\
   & 
\trace \left[ \bm{U}_{B^\dag B}^{(k,k^\prime)} \rhoakk \right]  =
    - \trace \left[ \bm{F}_2^{(k)} \rhoakk \bm{F}_2^{(k^\prime)\, \dag } + \check{\delta}_{k k^\prime} \left( \bar{\tau} \bar{\mathcal{F}}_{2 2}(\rhoa)^{(k)} + \frac{\pi}{2} |b^{(k)}|^2 \right) \right] \label{eq:Uequation4} .
\end{align} \label{eq:Uequationall}
\end{subequations}}
Equations (\ref{eq:Uequationall}) are always solvable,
as proved in \cite[Lemma 4]{azouitQST2017}. 
We will then show that our definition of $\mathcal{K}_2$ in (\ref{eq:kmap2many})-(\ref{eq:Uequationall}) indeed satisfies the 
second-order invariance condition (\ref{eq:secondorderinvariance}).
We start by observing that, thanks to assumptions  $\mathcal{L}_B(\cdot) = 0$ and $\mathcal{L}_{s,1} = 0$,
condition 
(\ref{eq:secondorderinvariance}) reads as:
\begin{align}
& \mathcal{L}_A\left( \mathcal{K}_2(\rho_s)  \right) 
 + \mathcal{L}_{int} \left( \mathcal{K}_1(\rho_s) \right) -i \Big( [ \tilde{\bm{H}}_B , \, \mathcal{K}_2(\rho_s) ] \nonumber \\
 & - \mathcal{K}_2( [\tilde{\bm{H}}_B, \, \rhos] ) \Big)   \; = \; 
 \rhoa \otimes \mathcal{L}_{s,2}(\rho_s)   \label{eq:secondorderinvariance2many} .
\end{align}
Then, in order to compute the left-hand side of (\ref{eq:secondorderinvariance2many}),
we observe that 
the computation of term $-i [ \tilde{\bm{H}}_B , \, \mathcal{K}_2(\rho_s) ]-\mathcal{K}_2 (\mathcal{L}_{s,0}(\rho_s))$
is simplified  by the following set of properties directly implied by Assumption \ref{assumption:cbassumption}:
{\small
\begin{align}
& \left[ \tilde{\bm{H}}_B, \bm{B} \right] =  -c_{B^\dag} \bm{B} , \nonumber \\
& \left[ \tilde{\bm{H}}_B, \bm{B}_h \bm{B}_j^\dag  \right] =  - 2 \delta_{hj} \, c_{B^\dag} \bm{B}_h \bm{B}_j^\dag \;\;\; \forall h,j \in \{1, 2\} \nonumber , \\
& -i \left( \left[ \tilde{\bm{H}}_B, \,\, \bm{B}_h \rhos \bm{B}_j^\dag \right]  - \bm{B}_h \left[  \tilde{\bm{H}}_B, \, \rhos \right] \bm{B}_j^\dag \right)= \nonumber \\
& \;\;\;\;\;\;\;  = 2 i c_{B^\dag} \delta_{hj} \bm{B}_h \rhos \bm{B}_j^\dag \;\;\; \forall h,j \in \{1, 2\}  \label{eq:brhobproperty} ,
\end{align}}
whereas the computation of term $\mathcal{L}_A(\mathcal{K}_2(\rhos))$ is simplified by the following Claim.
\begin{claim} \label{claim:LaK2Q}
 $\mathcal{L}_A^{(k)} \left( \bar{\mathcal{K}}_{hj}^{(k)}(\rhos) \right)
 + 2 i c_{B^\dag} \delta_{hj} \bar{\mathcal{K}}_{hj}^{(k)} (\rhos)
 + \mathcal{S}^{(k)}\left(  \bar{\mathcal{F}}_{hj}(\rhoa)^{(k)}\right)=0$.
\end{claim}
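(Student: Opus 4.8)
The plan is to recognize $\bar{\mathcal{K}}_{hj}^{(k)}$ as the integral representation of the solution of a shifted Lyapunov-type equation on the fast subsystem $\mathcal{H}_A^{(k)}$, and to verify the Claim by differentiating under the integral sign. First I would strip off the explicit time dependence of the integrand. Setting $\lambda := 2\,i\,c_{B^\dag}\,\delta_{hj}$, which is purely imaginary because $c_{B^\dag}\in\mathbb{R}$, the definition gives $\bar{\mathcal{F}}_{hj}(t,\rhoa)^{(k)} = e^{\lambda t}\,\bar{\mathcal{F}}_{hj}(\rhoa)^{(k)}$, so that with the trace-zero operator $G := \mathcal{S}^{(k)}\!\big(\bar{\mathcal{F}}_{hj}(\rhoa)^{(k)}\big)$ one has $\bar{\mathcal{K}}_{hj}^{(k)}(\rhos) = \int_{0}^{+\infty} e^{\lambda t}\, e^{t\,\mathcal{L}_A^{(k)}}\, G\; dt$.

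Second, I would establish convergence of this integral. Since $\mathcal{L}_A^{(k)}$ is trace-preserving, the subspace of trace-zero operators on $\mathcal{H}_A^{(k)}$ is invariant under the semigroup $e^{t\,\mathcal{L}_A^{(k)}}$; and by the standing hypothesis of Section \ref{sec:setting} that each fast part relaxes to its unique steady state $\rhoak$, the restriction of $\mathcal{L}_A^{(k)}$ to this subspace has spectrum in the open left half-plane, hence $\lVert e^{t\,\mathcal{L}_A^{(k)}} G\rVert \le C\, e^{-\alpha t}\lVert G\rVert$ for some $\alpha, C > 0$. Together with $\lvert e^{\lambda t}\rvert = 1$ this makes the integrand absolutely integrable and gives $e^{\lambda t}\, e^{t\,\mathcal{L}_A^{(k)}} G \to 0$ as $t\to+\infty$.

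Third, the actual computation. Applying the superoperator $Q\mapsto \mathcal{L}_A^{(k)}(Q) + \lambda\,Q$ under the integral, which is legitimate thanks to the uniform exponential bound, and using $\tfrac{d}{dt}\, e^{t\,\mathcal{L}_A^{(k)}} = \mathcal{L}_A^{(k)}\, e^{t\,\mathcal{L}_A^{(k)}}$, the integrand collapses to the total derivative $\tfrac{d}{dt}\big(e^{\lambda t}\, e^{t\,\mathcal{L}_A^{(k)}} G\big)$; integrating over $[0,+\infty)$, the upper endpoint contributes $0$ by the decay estimate and the lower endpoint contributes $-G$, so that $\mathcal{L}_A^{(k)}\!\big(\bar{\mathcal{K}}_{hj}^{(k)}(\rhos)\big) + \lambda\,\bar{\mathcal{K}}_{hj}^{(k)}(\rhos) = -G$. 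Substituting back $\lambda = 2\,i\,c_{B^\dag}\,\delta_{hj}$ and $G = \mathcal{S}^{(k)}\!\big(\bar{\mathcal{F}}_{hj}(\rhoa)^{(k)}\big)$ yields exactly the asserted identity.

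The only genuinely non-routine ingredient is the exponential decay of $e^{t\,\mathcal{L}_A^{(k)}}$ on the trace-zero subspace, equivalently that $0$ is the only eigenvalue of $\mathcal{L}_A^{(k)}$ on the imaginary axis and that it is simple with eigenvector $\rhoak$; this is precisely where the uniqueness-and-relaxation assumption on the fast dynamics is used, and in particular it also guarantees that the shifted operator $\mathcal{L}_A^{(k)}(\cdot) + \lambda(\cdot)$ is invertible on trace-zero operators, so $\bar{\mathcal{K}}_{hj}^{(k)}$ is the unique such solution. Once this spectral fact is granted, the interchange of the superoperator with the integral and the telescoping are immediate, and I expect no further obstacle: structurally the identity is the second-order analogue of the first-order selection equations \eqref{eq:singleFastFirstOrderSelectionOneMany}, with the extra shift $\lambda$ originating from the commutation property $\big[\tilde{\bm{H}}_B, \bm{B}_h\bm{B}_j^\dag\big] = -2\,\delta_{hj}\,c_{B^\dag}\,\bm{B}_h\bm{B}_j^\dag$ recorded in \eqref{eq:brhobproperty}.
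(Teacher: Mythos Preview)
Your proof is correct and follows essentially the same route as the paper's: combine the phase factor with the Lindblad semigroup into the shifted generator $\mathcal{L}_A^{(k)}+\lambda\,\mathrm{Id}$, recognize the integrand as $\tfrac{d}{dt}$ of its own primitive, and use exponential decay on trace-zero operators to evaluate the boundary term. The only cosmetic difference is that the paper treats $h=j$ by reference to \cite[Lemmas~1 and~4]{azouitQST2017} and the off-diagonal pair by Hermitian conjugation, whereas you handle all four cases uniformly; your explicit spectral justification of the decay estimate is in fact a bit more complete than the paper's appeal to ``strong dissipativity''.
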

\begin{proof}
 Case $h=j$ is proved along the lines of \cite[Lemma 1 and Lemma 4]{azouitQST2017}.
 Case $(h,j)=(2,1)$ is the hermitian conjugate of case $(h,j)=(1,2)$ which we are now going to prove.
 Let $\mathcal{L}^\sharp (\cdot)$ denote the super-operator
 $\mathcal{L}_A^{(k)}(\cdot) - 2 i c_{B^\dag} \textit{Id}(\cdot)$. Then, since $\mathcal{L}_A^{(k)}$ is strongly
 dissipative on $\mathcal{H}_A^{(k)}$, we have that:
\begin{equation}
\lim_{t \rightarrow +\infty}{\exp ( t \mathcal{L}^\sharp (\cdot))(\bm{X})} = \lim_{t \rightarrow +\infty}{\exp(-i 2 c_{B^\dag}t) \exp(t\,\mathcal{L}_A^{(k)}(\cdot))(\bm{X})} = 0 , \label{eq:claim2dev1} 
\end{equation}
for any operator $\bm{X}$ such that  $\trace_{\mathcal{H}_A^{(k)}}[\bm{X}] = 0$.
 First, we formulate $\bar{\mathcal{K}}_{12}^{(k)}$ as:
 \begin{align}
  & \bar{\mathcal{K}}_{12}^{(k)}(\rhos ) = \int_{0}^{+\infty}{\mathfrak{K}_{12}(t,\rhoa)^{(k)}\,\, dt} \nonumber \\
  & \mathfrak{K}_{12}(t,\rhoa)^{(k)} := \exp \left( t \mathcal{L}^\sharp (\cdot)  \right)
    \Big( \mathcal{S}^{(k)} \left( \bar{\mathcal{F}}_{12}(\rhoa)^{(k)} \right) \Big) \nonumber 
  \end{align}
Second, we observe that:
\begin{align}
 & \mathcal{L}^\sharp \left(\mathfrak{K}_{12}(t,\rhoa)^{(k)}\right) = \frac{d}{dt} \mathfrak{K}_{12}(t,\rhoa)^{(k)} \label{eq:claim2dev2} .
\end{align}
We then conclude from (\ref{eq:claim2dev1}) and (\ref{eq:claim2dev2}) that:
\begin{align}
 \mathcal{L}^\sharp \left( \bar{\mathcal{K}}_{12}^{(k)}(\rhos) \right) = 
  \left[ \mathfrak{K}_{12}(t,\rhoa)^{(k)} \right]_{0}^{+\infty} = -\mathcal{S}^{(k)}\left( \bar{\mathcal{F}}_{12}(\rhoa)^{(k)}\right) \nonumber .
\end{align}
\end{proof}
Furthermore, by considering $\mathcal{L}_{int}(\rho) = -i \sum_{k} \sum_{h \in \{1 , 2 \}} [ \bm{A}_h^{(k)\, \dag} \otimes \bm{B}_h^\dag , \,\, \rho ]$,
we have that:
{\small 
\begin{align}
 \mathcal{L}_{int}(\mathcal{K}_1(\rhos)) =& \sum_{k,k^\prime} \, \rhoalesskk \otimes \sum_{j,h \in \{ 1, 2 \}} \bm{B}_j^\dag \bm{B}_h \rhos \otimes \left( -\bm{A}_j^{(k)\, \dag} \bm{F}_h^{(k^\prime)} \rhoakk \right) 
 + \hc \nonumber \\
 & + \bm{B}_h \rhos \bm{B}_j^\dag \otimes \left( \bm{F}_h^{(k)} \rhoakk \bm{A}_j^{(k^\prime)\, \dag } + \bm{A}_h^{(k)} \rhoakk \bm{F}_j^{(k^\prime)\, \dag } \right) , \nonumber \\
 =  & \sum_{k,k^\prime} \, \rhoalesskk \otimes \sum_{j,h \in \{ 1, 2 \}} \bm{B}_j^\dag \bm{B}_h \rhos \otimes \left( -\bm{A}_j^{(k)\, \dag} \bm{F}_h^{(k^\prime)} \rhoakk \right) 
 + \hc \nonumber \\
 & + \bm{B}_h \rhos \bm{B}_j^\dag \otimes \Big(   - 2 i c_{B^\dag} \, \delta_{hj} \, \bm{F}_h^{(k)} \rhoakk \bm{F}_j^{(k^\prime)\, \dag} \nonumber \\
 & \;\;\;\;\;\;\;\;\;\; - \mathcal{L}_A^{(k,k^\prime)} \left( \bm{F}_h^{(k)} \rhoakk \bm{F}_j^{(k^\prime)\, \dag } \right)   \nonumber \\
 & \;\;\;\;\;\;\;\;\;\; + \check{\delta}_{k,k^\prime} \, \bar{\mathcal{F}}_{hj}(\rhoa)^{(k)}  \Big) , \label{eq:LintK1dev}
\end{align}}
where, in the last equality, 
we first made use of formulas (\ref{eq:singleFastFirstOrderSelectionOneMany}) and then we used the following 
formula generalized from \cite[Lemma 6]{azouitQST2017}:
{\small 
\begin{align}
& \bm{F}_h^{(k)}\mathcal{L}_A^{(k,k^\prime)} \left( \rhoakk \bm{F}_j^{(k^\prime) \,\dag} \right) 
+ \mathcal{L}_A^{(k,k^\prime)} \left( \bm{F}_h^{(k)} \rhoakk  \right) \bm{F}_j^{(k^\prime) \, \dag } = \nonumber \\
& = \mathcal{L}_A^{(k,k^\prime)} \left( \bm{F}_h^{(k)} \rhoakk   \bm{F}_j^{(k^\prime) \, \dag }  \right) 
- \check{\delta}_{k,k^\prime} \, \bar{\mathcal{F}}_{hj}(\rhoa)^{(k)} , \;\;\; \forall h,j, \; \forall k, k^\prime \label{eq:fhfjformula} .
\end{align}}
Finally, by making use of definitions (\ref{eq:kmap1many})-(\ref{eq:Kq2expanded}), properties (\ref{eq:brhobproperty}), simplification (\ref{eq:LintK1dev}), and 
Claim \ref{claim:LaK2Q}, we can compute:
\begin{align}
 & \mathcal{L}_A(\mathcal{K}_2(\rhos)) - i \big ( [ \tilde{\bm{H}}_B, \, \mathcal{K}_2(\rhos) ] - \mathcal{K}_2([\tilde{\bm{H}}_B, \, \rhos ])  \big) + \mathcal{L}_{int}(\mathcal{K}_1(\rhos)) = \nonumber \\
 & = \sum_{k,k^\prime} \rhoalesskk \otimes  \sum_{h,j \in \{1, 2\}} \Big\{ \bm{B}_j^\dag \bm{B}_h \rhos \otimes \Big(
   \mathcal{L}_A^{(k,k^\prime)}\big( \bm{U}_{B_j^\dag B_h}^{(k,k^\prime)} \rhoakk \big) \nonumber \\
   & \;\;\;\;\;\;\; + 2 i c_{B^\dag} \, \delta_{hj}  \, \bm{U}_{B_j^\dag B_h}^{(k,k^\prime)} \rhoakk  -
   \bm{A}_j^{(k) \, \dag } \bm{F}_h^{(k^\prime )} \rhoakk  \Big) + \hc  \Big\} \nonumber \\
& + 
\sum_{k} \rhoalessk \otimes  \sum_{h,j \in \{1, 2\}} \Big\{  \bm{B}_h \rhos \bm{B}_j^\dag \otimes \Big( 2 c_{B^\dag} \, |\delta_{hj}| b_h^{(k)} \rhoak \nonumber \\
& \;\;\;\;\;\;\;  + \trace \left[  \bar{\mathcal{F}}_{hj}(\rhoa)^{(k)} \right] \, \rhoak \Big) \Big\} =: \mathcal{E}(\rhos)
    \label{eq:almostLs2} .
\end{align}
Observe that the definition of $b^{(k)}_h$ in (\ref{eq:kmap2many}) implies that 
$2 c_{B^\dag} \, |\delta_{hj}| b_h^{(k)} + \trace [  \bar{\mathcal{F}}_{hj}(\rhoa)^{(k)} ] =0$
whenever $h \neq j$.
Furthermore, in the last equality of
(\ref{eq:LintK1dev}), we observed that
$\trace [ 
       \bm{F}_h^{(k)} 
       \rhoak \bm{A}_h^{(k) \, \dag } 
      + \bm{A}_h^{(k)} \rhoak \bm{F}_h^{(k)\, \dag } ]
      = \trace [ \bar{\mathcal{F}}_{hh}(\rhoa)^{(k)}]$ for any $h \in \{ 1, 2 \}$.
The latter two observations and trace condition (\ref{eq:Uequation2}) will be instrumental
in the derivation of $\mathcal{L}_{s,2}$ in (\ref{eq:Ls2almostmany}).
Indeed, by recalling (\ref{eq:secondorderinvariance2many}),
Lindblad $\mathcal{L}_{s,2}$ can be obtained by taking the partial trace over $\mathcal{H}_A$ of
$\mathcal{E}(\rhos)$ in expression (\ref{eq:almostLs2}), as follows:
{\footnotesize
\begin{align}
\mathcal{L}_{s,2}(\rho_s) =&  \sum_{k} \Big\{ \bm{B}\bm{B}^\dag \rho_s   \;\; \trace \left(
				  - \bm{A}^{(k)\,\dag} \bm{F}_1^{(k)} \bar{\rho}_A^{(k)}
				\right)
				+ \emph{herm.conj.} + \nonumber \\ &
 \bm{B}^\dag \bm{B} \rho_s  \;\; \trace 
				\left(
				  - \bm{A}^{(k)} \bm{F}_2^{(k)} \bar{\rho}_A^{(k)}
				\right) 
				+ \emph{herm.conj.} + \nonumber \\ 
				 &
 \bm{B}^\dag \rho_s \bm{B}  \;\; \trace
				\left(
				 \bm{F}_1^{(k)} \bar{\rho}_A^{(k)} \bm{A}^{(k)\,\dag}
				+ \bm{A}^{(k)} \bar{\rho}_A^{(k)} \bm{F}_1^{(k)\,\dag}
				\right)
				+ \nonumber \\ &
 \bm{B} \rho_s \bm{B}^\dag   \;\; \trace
				\left(
				 \bm{F}_2^{(k)} \bar{\rho}_A^{(k)} \bm{A}^{(k)}
				+ \bm{A}^{(k)\,\dag} \bar{\rho}_A^{(k)} \bm{F}_2^{(k)\,\dag}
				\right) \Big\} + \nonumber \\ 
 \sum_{k\neq k^\prime} \Big\{ & \bm{B}\bm{B}^\dag \rho_s   \;\;
				   \trace \left( - \bm{A}^{(k)\,\dag} \bar{\rho}_A^{(k)} \right) \trace \left( \bm{F}_1^{(k^\prime)} \bar{\rho}_A^{(k^\prime)} \right) + \emph{h.c.}
				+ \nonumber  \\ &
 \bm{B}^\dag \bm{B} \rho_s  \;\;
				   \trace \left(- \bm{A}^{(k)} \bar{\rho}_A^{(k)} \right) \trace \left( \bm{F}_2^{(k^\prime)} \bar{\rho}_A^{(k^\prime)} \right)
				+ \emph{h.c.} . 
\label{eq:Ls2almostmany}
\end{align}}
If we now apply definitions (\ref{eq:definitionsz}) and property (\ref{eq:conditiontrace1many}),
the expression of $\mathcal{L}_{s,2}$ in (\ref{eq:Ls2almostmany}) simplifies to:
{\footnotesize
\begin{subequations}
\begin{align}
\mathcal{L}_{s,2}(\rho_s) =  \sum_{k} \Big\{ & -z_1^{(k)} \;  \bm{B}\bm{B}^\dag \rho_s  
 -z_2^{(k)} \;  \bm{B}^\dag \bm{B} \rho_s   \;\; + \hc +
				 \nonumber \\ &
+ \left( z_1^{(k)} + (z_1^{(k)\,\ast} \right) \; \bm{B}^\dag \rho_s \bm{B}  
				+ \nonumber \\ &
+ \left( z_2^{(k)} + (z_2^{(k)\,\ast} \right) \;  \bm{B} \rho_s \bm{B}^\dag   \Big\} + \nonumber \\ 
+ \sum_{k\neq k^\prime} \frac{i}{c_{B^\dag}} \Big\{ & \left( \bm{B}\bm{B}^\dag \rho_s  +  \rho_s \bm{B}^\dag \bm{B} \right)  \;\;
				   \left(  z_0^{(k)\,\ast} z_0^{(k^\prime)} \right) 
				- \text{\emph{ herm.conj.}} \nonumber \\ &
+ \left( \bm{B}^\dag \bm{B} \rho_s  + \rho_s \bm{B}\bm{B}^\dag \right) \;\;
				   \left( -  z_0^{(k)} z_0^{(k^\prime)\,\ast} \right) 
 \Big\} \nonumber \\
=  \sum_{k}  \Big\{ & -i \,  \Im \left( z_1^{(k)} \right) \left[ \bm{B} \bm{B}^\dag , \rho_s \right] 
			         -i \,  \Im\left(z_2^{(k)}\right) \left[ \bm{B}^\dag \bm{B} , \rho_s \right] + \nonumber \\
& 				 +2 \,  \Re\left(z_1^{(k)}\right) \mathcal{D}_{\bm{B}^\dag}(\rho_s)
				 +2 \,  \Re\left(z_2^{(k)}\right) \mathcal{D}_{\bm{B}}(\rho_s) \Big\} - \nonumber \\ 
- i \sum_{k > k^\prime} \Big\{ & \; \delta^{(k,k^\prime)} \; \left( \bm{B}\bm{B}^\dag \rho_s  +  \rho_s \bm{B}^\dag \bm{B} \right)  \;\;
				- \text{\emph{ herm.conj.}} \Big\} \nonumber ,
\end{align} \label{eq:Ls2almostmany2ndpassage}
\end{subequations}}

{\parskip = -2mm
\noindent
which immediately reads as (\ref{eq:Ls2many}).
Now, by first subtracting $\rhoa \otimes \mathcal{L}_{s,2}(\rhos)$ from $\mathcal{E}(\rhos)$ in expression (\ref{eq:almostLs2})
and then using (\ref{eq:Uequation}), we conclude that:
{\small \begin{align}
 & \mathcal{E}(\rhos) - \rhoa \otimes \trace \left[ \mathcal{E}(\rhos) \right] = \nonumber \\
& = \sum_{k,k^\prime} \rhoalesskk \otimes  \sum_{h,j \in \{1, 2\}} \Big\{ \bm{B}_j^\dag \bm{B}_h \rhos \otimes \Big(
   \mathcal{L}_A^{(k,k^\prime)}\big( \bm{U}_{B_j^\dag B_h}^{(k,k^\prime)} \rhoakk \big) \nonumber \\
   & \;\;\;\;\;\;\; + \mathcal{S}^{(k,k^\prime)} \big( 2 i c_{B^\dag} \, \delta_{hj}  \, \bm{U}_{B_j^\dag B_h}^{(k,k^\prime)} \rhoakk  -
   \bm{A}_j^{(k) \, \dag } \bm{F}_h^{(k^\prime )} \rhoakk  \big) \Big)  + \hc  \Big\}  = 0 \nonumber ,
 \end{align}}
which immediately shows that (\ref{eq:secondorderinvariance2many}) is satisfied as an identify.}

We are now going to prove that $\mathcal{K}(\rhos)$ in our definition (\ref{eq:kmap1many}) is indeed a CPTP mapping.
Since terms 
\begin{align}
( \bm{I} - i \varepsilon \bm{M}  + \varepsilon^2 \bm{N} )
(\rhoa \otimes \rhos) 
( \bm{I} - i \varepsilon \bm{M}  + \varepsilon^2 \bm{N} )^\dag
+\varepsilon^2  \left( c_{B^\dag} \rhoa \otimes \mathfrak{g}(\rhos) + \rhoa \otimes \mathfrak{f}(\rhos)\right) \nonumber
\end{align}
already retain the Kraus map form, what remains to prove is complete positivity of $\mathcal{K}_2^Q(\rhos)$.
\begin{claim} \label{claim:K2Qcptp}
  There exists $\bar{\tau}>0$ such that $\mathcal{K}^Q_2(\rhos)$ is a completely positive mapping.
\end{claim}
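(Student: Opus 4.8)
The plan is to turn complete positivity of $\mathcal{K}_2^Q$ into a finite-dimensional positive-semidefiniteness statement about a $2\times 2$ block operator on $\mathcal{H}_A$, and then to use the free parameter $\bar\tau$ to dominate the one contribution that is not manifestly positive. Concretely, by (\ref{eq:kmap2many})--(\ref{eq:Kq2expanded}), after writing $\bm{B}_1^\dag=\bm{B}$, $\bm{B}_2^\dag=\bm{B}^\dag$ and carrying out the one-period integrals defining $\mathfrak{g}$ and $\mathcal{G}$, the map has the form $\mathcal{K}_2^Q(\rhos)=\sum_{h,j\in\{1,2\}}\bm{C}_{hj}\otimes\bm{B}_h\,\rhos\,\bm{B}_j^\dag$ for suitable ($\bar\tau$-parameterized) operators $\bm{C}_{hj}$ on $\mathcal{H}_A$, where one collects into the factors $\bm{B}_h\rhos\bm{B}_j^\dag$ all the dependence on $\rhos$ (in particular $\bar{\mathcal{K}}_{hj}^{(k)}$ does not actually depend on $\rhos$). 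I would then invoke the elementary fact that such a map is completely positive whenever the block operator $[\bm{C}_{hj}]_{h,j\in\{1,2\}}$ on $\mathcal{H}_A\otimes\mathbb{C}^2$ is positive semidefinite: a decomposition $[\bm{C}_{hj}]=\sum_\alpha\ket{w_\alpha}\bra{w_\alpha}$ with $\ket{w_\alpha}=\sum_h\ket{h}\otimes\ket{u_{\alpha,h}}$ yields a Kraus representation $\mathcal{K}_2^Q(\rhos)=\sum_\alpha\bm{N}_\alpha\,\rhos\,\bm{N}_\alpha^\dag$ with $\bm{N}_\alpha=\sum_h\ket{u_{\alpha,h}}\otimes\bm{B}_h$. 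So it suffices to exhibit some $\bar\tau>0$ with $[\bm{C}_{hj}]\ge 0$.

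The second step isolates the role of $\bar\tau$. Inspecting the construction, $\bar\tau$ enters $\mathcal{K}_2^Q$ only through the term $\tfrac{c_{B^\dag}\bar\tau}{\pi}\rhoa\otimes\mathcal{G}(\rhos)=\bar\tau\sum_k\rhoa\otimes\big(\gamma_1^{(k)}\bm{B}^\dag\rhos\bm{B}+\gamma_2^{(k)}\bm{B}\rhos\bm{B}^\dag\big)$, with $\gamma_j^{(k)}=\sum_\mu\trace\big([\bm{L}_{A,\mu}^{(k)},\bm{F}_j^{(k)}]\,\rhoak\,[\bm{L}_{A,\mu}^{(k)},\bm{F}_j^{(k)}]^\dag\big)\ge 0$ (the operators $\bm{U}^{(k,k')}$ of (\ref{eq:Uequationall}) feed only $\mathcal{K}_2^L$, so play no part here). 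Hence $[\bm{C}_{hj}]=[\bm{C}_{hj}^{(0)}]+\bar\tau\,\diag(\Gamma_1\rhoa,\Gamma_2\rhoa)$ with $\Gamma_j=\sum_k\gamma_j^{(k)}$ and $[\bm{C}_{hj}^{(0)}]$ independent of $\bar\tau$. By the standard linear-algebra criterion, $[\bm{C}_{hj}^{(0)}]+\bar\tau\,\diag(\Gamma_1\rhoa,\Gamma_2\rhoa)\ge 0$ for all $\bar\tau$ large enough precisely when $\bra{v}\,[\bm{C}_{hj}^{(0)}]\,\ket{v}\ge 0$ for every $\ket{v}$ in the kernel of $\diag(\Gamma_1\rhoa,\Gamma_2\rhoa)$ --- that is, when $\Gamma_1,\Gamma_2>0$, on the sector $\ker\rhoa\oplus\ker\rhoa$. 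The whole claim thus reduces to positivity of $[\bm{C}_{hj}^{(0)}]$ restricted to that kernel-of-$\rhoa$ sector.

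The third and delicate step establishes this restricted positivity. The $\mathfrak{f}$-, $\mathfrak{g}$- and $\tfrac\pi2$-contributions to $\bm{C}_{hj}^{(0)}$ all carry the full density operator $\rhoa$ as their $\mathcal{H}_A$-factor, with nonnegative coefficients (using $f_1,f_2\ge 0$), and therefore vanish when sandwiched between $\ket{v}$ with components in $\ker\rhoa$; only $\sum_k\rhoalessk\otimes\bar{\mathcal{K}}_{hj}^{(k)}$ survives. For such $\ket{v}=(\ket{v_1},\ket{v_2})$ I would unfold $\bar{\mathcal{K}}_{hj}^{(k)}=\int_0^{+\infty}e^{t\mathcal{L}_A^{(k)}}\big(\mathcal{S}^{(k)}(\bar{\mathcal{F}}_{hj}^{(k)}(t))\big)\,dt$ (well-defined, as in Claim \ref{claim:LaK2Q}) and note the crucial cancellation: the subtracted term $\trace(\bar{\mathcal{F}}_{hj}^{(k)}(t))\,\rhoak$ --- exactly the piece that renders the integral convergent via strong dissipativity of $\mathcal{L}_A^{(k)}$ --- is annihilated by $\ket{v_h},\ket{v_j}\in\ker\rhoa$, while $e^{t\mathcal{L}_A^{(k)}}$ fixes $\rhoak$. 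What is left is $\sum_k\int_0^{+\infty}\bra{v}\big(\mathrm{id}\otimes e^{t\mathcal{L}_A^{(k)}}\otimes\mathrm{id}_{\mathbb{C}^2}\big)\!\big(\rhoalessk\otimes[\bar{\mathcal{F}}_{hj}^{(k)}(t)]_{h,j}\big)\ket{v}\,dt$, which still converges because its integrand tends exponentially fast to $\bra{v}\,\rhoa\otimes[\trace\bar{\mathcal{F}}_{hj}^{(k)}(t)]_{h,j}\,\ket{v}=0$. Each integrand is moreover nonnegative: the $2\times 2$ block $[\bar{\mathcal{F}}_{hj}^{(k)}(t)]_{h,j}=\sum_\mu\tilde{\bm{V}}_\mu(t)\,\tilde{\bm{V}}_\mu(t)^\dag$, with $\tilde{\bm{V}}_\mu(t)$ the two-component column $\big([\bm{L}_{A,\mu}^{(k)},\bm{F}_1^{(k)}]\sqrt{\rhoak},\; e^{2ic_{B^\dag}t}[\bm{L}_{A,\mu}^{(k)},\bm{F}_2^{(k)}]\sqrt{\rhoak}\big)$, is positive semidefinite, $\rhoalessk\ge 0$, and $\mathrm{id}\otimes e^{t\mathcal{L}_A^{(k)}}\otimes\mathrm{id}_{\mathbb{C}^2}$ is completely positive; hence $\bra{v}(\cdots)\ket{v}\ge 0$ for every $\ket{v}$. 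Summing these nonnegative integrals over $k$ gives $\bra{v}\,[\bm{C}_{hj}^{(0)}]\,\ket{v}\ge 0$. The degenerate cases $\Gamma_j=0$, where $\bar{\mathcal{F}}_{hj}^{(k)}$ vanishes along the whole $j$-th row and column so that positivity of the relevant sub-block rests entirely on the $\mathfrak{f}$/$\mathfrak{g}$/$\tfrac\pi2$ padding with its manifestly nonnegative coefficients, are checked directly; combining with the second step then produces the advertised $\bar\tau$.

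The main obstacle is precisely the tension exploited in the third step, and it is what dictates the shape of the correction terms in (\ref{eq:kmap2many}): the projection $\mathcal{S}^{(k)}$ is indispensable for the $t$-integral defining $\bar{\mathcal{K}}_{hj}^{(k)}$ to converge, yet it spoils the manifest positivity of the integrand, and it is only after restricting to the kernel-of-$\rhoa$ sector that the subtracted piece disappears while its role in controlling the tail is taken over by the exponential decay coming from strong dissipativity. Everything else --- the block-matrix reduction, the period integrations, the bookkeeping of which terms carry $\rhoa$, and the elementary criterion for $A+\bar\tau B\ge 0$ --- is routine.
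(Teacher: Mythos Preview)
Your block-matrix reduction to positivity of $[\bm{C}_{hj}]_{h,j}$ on $\mathcal{H}_A\otimes\mathbb{C}^2$ is a legitimate and rather clean sufficient condition, and it is genuinely different from the paper's route, which works instead with the Choi-type matrix $M$ indexed by a basis of the full space $\mathcal{H}_A\otimes\mathcal{H}_B$. Your identification of which contributions carry a factor $\rhoa$ and hence vanish on $\ker\rhoa\oplus\ker\rhoa$, and the subsequent positivity of the surviving integrand via $[\bar{\mathcal{F}}^{(k)}_{hj}(t)]=\sum_\mu\tilde{\bm{V}}_\mu\tilde{\bm{V}}_\mu^\dag\ge 0$, are correct and parallel the paper's computation.

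The gap is in your second step. The ``standard linear-algebra criterion'' you invoke is false as stated: for Hermitian $A$ and $B\ge 0$, the condition $\bra{v}A\ket{v}\ge 0$ for all $\ket{v}\in\ker B$ is necessary but \emph{not} sufficient for $A+\bar\tau B\ge 0$ at large $\bar\tau$. Take $B=\diag(1,0)$ and $A=\left(\begin{smallmatrix}0&1\\1&0\end{smallmatrix}\right)$: then $\ker B=\mathrm{span}\{e_2\}$, $\bra{e_2}A\ket{e_2}=0\ge 0$, yet $\det(A+\bar\tau B)=-1$ for every $\bar\tau$. The correct hypothesis adds the kernel condition: if $\ket{v}\in\ker B$ and $\bra{v}A\ket{v}=0$, then $A\ket{v}=0$. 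This is precisely what the paper checks --- it shows that $z^\dag Rz=0$ together with $z^\dag Mz=0$ forces $Mz=0$, and only then invokes \cite[Lemma~2]{azouitQST2017} to produce $\bar\tau$. In your framework the missing step is: if $\ket{v}\in\ker\rhoa\oplus\ker\rhoa$ and $\bra{v}[\bm{C}^{(0)}_{hj}]\ket{v}=0$, then each nonnegative integrand $\bra{v}(\cdots)\ket{v}$ vanishes, positive semidefiniteness of the integrand gives $(\cdots)\ket{v}=0$ for all $t,k$, and since the $\rhoa$-carrying pieces already annihilate $\ket{v}$ one concludes $[\bm{C}^{(0)}_{hj}]\ket{v}=0$. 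All the ingredients for this are present in your third step, but without writing it out and replacing the incorrect ``precisely when'' by the correct lemma, the proof is incomplete.
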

\begin{proof}
Consider an Hilbert basis $\{ \ket{n} \}_{1 \le n \le d}$ of $\mathcal{H}_A \otimes \mathcal{H}_B$. Let $\tilde{\mathcal{H}}$ be any Hilbert space of finite dimension.
Let:
\begin{align}
 \bar{\mathcal{K}}_{\bar{\tau}}(\bm{X}) := & \int_{0}^{+\infty} e^{\mathcal{L}_A(\cdot) t} \Big( \mathcal{S} \big( \bm{W}_\mu^{(k)}(t) \, \bm{X} \, \bm{W}_\mu^{(k)\,\dag}(t) \big) \Big) \, dt \nonumber \\
 & + \bar{\tau} \frac{c_{B^\dag}}{\pi} \rhoa \otimes \trace_A
 \left[ \int_{0}^{\pi / {c_{B^\dag}}} \bm{W}_\mu^{(k)}(t) \, \bm{X} \, \bm{W}_\mu^{(k)\,\dag}(t)  \, dt  \right]
\end{align}
For each $n$ and $\nu$, select any $\ket{\phi_n}, \ket{\psi_\nu} \in \tilde{\mathcal{H}}$ and
define:
\begin{equation}
 \ket{\Phi} := \sum_{n=1}^{d}{\ket{n} \otimes \ket{\phi_n}}, \;\;\;
 \ket{\Psi} := \sum_{\nu=1}^{d}{\ket{\nu} \otimes \ket{\psi_\nu}}, \;\;\;
\end{equation}
We are then going to prove that there exists $\bar{\tau}>0$ such that 
$\bra{\Psi} \bar{\mathcal{K}}(  \ket{\Phi} \bra{\Phi} ) \ket{\Psi} \ge 0$.
Standard computations give:
{\small \begin{align}
 \bra{\Psi} \bar{\mathcal{K}}(  \ket{\Phi} \bra{\Phi} ) \ket{\Psi} \ge 0 = &
 \sum_{n^\prime, \nu^\prime, n, \nu} z_{n^\prime, \nu^\prime}^\star \, M_{n^\prime, \nu^\prime, n, \nu} \, z_{n, \nu} \nonumber ,
\end{align}}
where $z_{n,\mu} := \braket{\phi_n}{\psi_\nu }$ and
{\small \begin{align}
 M_{n^\prime, \nu^\prime, n, \nu} := & \int_{0}^{+\infty}{  \mathfrak{m}_{n^\prime, \nu^\prime, n, \nu}(t) - \mathfrak{r}_{n^\prime, \nu^\prime, n, \nu}(t) \, dt } + \bar{\tau}\,  \frac{c_{B^\dag}}{\pi} R_{n^\prime, \nu^\prime, n, \nu}    \nonumber \\
 R_{n^\prime, \nu^\prime, n, \nu} := & \int_{0}^{\pi/ c_{B^\dag}}{ \mathfrak{r}_{n^\prime, \nu^\prime, n, \nu}(t) \, dt } \nonumber \\
 \mathfrak{m}_{n^\prime, \nu^\prime, n, \nu}(t) := & \bra{\nu^\prime} \, e^{t \mathcal{L}_A(\cdot)} \left( \bm{W}_\mu^{(k)}(t) \ket{n^\prime} \bra{n} \bm{W}_\mu^{(k)\,\dag}(t) \right)  \ket{\nu} \nonumber \\
 \mathfrak{r}_{n^\prime, \nu^\prime, n, \nu}(t) := & \bra{\nu^\prime} \, \left( \rhoa \otimes \trace_A \left[ \bm{W}_\mu^{(k)}(t) \ket{n^\prime} \bra{n} \bm{W}_\mu^{(k)\,\dag}(t)\right]  \right)  \ket{\nu} .
\end{align}}
Since $\rhoa \otimes \trace_A \left[ \bm{W}_\mu^{(k)}(t) \bm{X} \bm{W}_\mu^{(k)\,\dag}(t)\right]$ is a
completely-positive superoperator on $\bm{X}$, the $d^2 \times d^2$ Hermitian matrix 
$\mathfrak{r}_{n^\prime, \nu^\prime, n, \nu}(t)$ is non-negative,
and therefore, for any vector $z \in \mathbb{C}^{d^2}$:
{\small \begin{align}
& z^\dag \, R  \, z =0 \label{eq:intrtz1} \\ 
& \;\; \Longrightarrow
\;\; z^\dag \; \cdot \;  \mathfrak{r}(t) \; \cdot \;  z =0 \;\; \forall t \in \left[ 0, \, \frac{\pi}{c_{B^\dag}} \right]  \label{eq:intrtz2}  .
\end{align}}
Now take a $d^2$ vector $z$ such that (\ref{eq:intrtz1}) is satisfied. We then have:
{\small \begin{equation}
  \bra{\Psi} \bar{\mathcal{K}} (  \ket{\Phi} \bra{\Phi} ) \ket{\Psi} \ge 0 = 
 \int_{0}^{+\infty}{  \,\, \sum_{n^\prime, \nu^\prime, n, \nu} z_{n^\prime, \nu^\prime}^\star  \, \mathfrak{m}_{n^\prime, \nu^\prime, n, \nu}(t) \, z_{n^\prime, \nu^\prime} \,\,\,\, dt }   \nonumber ,
\end{equation}}
Since the propagator $e^{t \mathcal{L}_A(\cdot)}$ is a completely-positive mapping of the form
$e^{t \mathcal{L}_A(\cdot)}(\bm{X}) = \sum_{\theta}\bm{H}_\theta(t) \bm{X} \bm{H}(t)_\theta^\dag$ for some operators $\bm{H}_\theta(t)$, we then have \cite[Lemma 1]{azouitQST2017}:
{\small \begin{align}
& \int_{0}^{+\infty}{   \sum_{n^\prime, \nu^\prime, n, \nu} z_{n^\prime, \nu^\prime}^\star  \, \mathfrak{m}_{n^\prime, \nu^\prime, n, \nu}(t) \, z_{n^\prime, \nu^\prime} dt }  = \nonumber \\
&
\int_{0}^{+\infty}{   \sum_{n^\prime, \nu^\prime, n, \nu} z_{n^\prime, \nu^\prime}^\star   \bra{\nu^\prime} \bm{H}_\theta(t) \bm{W}_\mu^{(k)}(t) \ket{n^\prime} 
 \bra{n}  \bm{W}_\mu^{(k)\,\dag}(t) \bm{H}^{\dag}_\theta(t) \ket{\nu}  \, z_{n, \nu} dt }  = \nonumber \\
 &
 = \int_{0}^{+\infty}{    { \left| \sum_{n^\prime, \nu^\prime, n, \nu} \, 
 \bra{n}  \bm{W}_\mu^{(k)\,\dag}(t) \bm{H}^{\dag}_\theta(t) \ket{\nu}  \, z_{n, \nu} \right| }^2dt }   \label{eq:msumform} \\
& \; \ge \; 0  , \nonumber 
\end{align}}
and we thus conclude that $z^\dag R z = 0$ implies $z^\dag M z \ge 0$.
Assume now that $z^\dag R z = 0$ and $z^\dag M z =0$.
Inequality (\ref{eq:msumform}) then implies that for any $t \ge 0$
\begin{equation}
   \sum_{n^\prime, \nu^\prime, n, \nu} \, 
 \bra{n}  \bm{W}_\mu^{(k)\,\dag}(t) \bm{H}^{\dag}_\theta(t) \ket{\nu}  \, z_{n, \nu} = 0 \nonumber
\end{equation}
and therefore $M z =0$.
Then, by virtue of \cite[Lemma 2]{azouitQST2017}, we conclude that there exists 
$\bar{\tau} > 0$ such that $\bar{\mathcal{K}}_{\bar{\tau}}$ is completely positive.

\end{proof}
We are now going to prove trace-preservation of $\mathcal{K}(\rhos)$. Since $\trace [ \mathcal{K}_1(\rhos) ]=1$ and
$\trace [ \mathcal{K}_1(\rhos) ]=0$ for all $\rhos$,
what remains to prove is $\trace [ \mathcal{K}_2(\rhos) ]=0$ for all 
$\rhos$.
First, by a subsequent application of formulas (\ref{eq:Uequation2}), (\ref{eq:fhfjformula}), and 
(\ref{eq:singleFastFirstOrderSelectionOneMany}), we have:
{\small
\begin{align}
 & \sum_{k,k^\prime} \trace\left[ \bm{U}_{BB}^{(k,k^\prime)} \rhoakk  + \rhoakk  \bm{U}_{B^\dag B^\dag }^{(k,k^\prime)\, \dag} \right] = \nonumber \\
 & \;\; =  \sum_{k,k^\prime} \trace\left[ \bm{U}_{BB}^{(k^\prime,k)} \rhoakk  + \rhoakk  \bm{U}_{B^\dag B^\dag }^{(k,k^\prime)\, \dag} \right] = \nonumber \\
 & \;\; =  \sum_{k,k^\prime} \frac{1}{2 i c_{B^\dag} }\trace\left[ 
		    \bm{F}_2^{(k)} \rhoakk \bm{A}^{(k^\prime)\, \dag} + \bm{A}^{(k)\, \dag} \rhoakk \bm{F}_1^{(k^\prime) \, \dag}
 \right] = \nonumber \\
 & \;\; =  \sum_{k,k^\prime} \trace\left[ 
		    - \bm{F}_2^{(k)} \rhoakk \bm{F}_1^{(k^\prime)}
 \right] + \frac{1}{2 i c_{B^\dag}} \, \trace\left[ 
		    \check{\delta}_{k k^\prime} \bar{\mathcal{F}}_{2 1}(\rhoa)^{(k)}
 \right] = \nonumber \\
& \;\; = \sum_{k,k^\prime} \trace\left[ 
		    - \bm{F}_2^{(k)} \rhoakk \bm{F}_1^{(k^\prime)}
 \right] +  i \, \check{\delta}_{k k^\prime} \,  b^{(k)} . \label{eq:traceK2part1}
\end{align}}
Secondly, it is straightforward to prove from (\ref{eq:Uequation3}) and (\ref{eq:Uequation4})  that $ \sum_{k,k^\prime}\trace [\bm{U}_{B B^\dag}^{(k,k^\prime)} \rhoakk]$
and $\sum_{k,k^\prime} \trace [ \bm{U}_{B^\dag B}^{(k,k^\prime)} \rhoakk]$ are real. Indeed:
{\small
\begin{align}
& \sum_{k,k^\prime} \trace \left[ \bm{U}_{B B^\dag}^{(k,k^\prime)} \rhoakk \right] = 
\sum_{k} \trace \left[ - \bm{F}_1^{(k)} \rhoak \bm{F}_1^{(k)\, \dag}  - \bar{\tau} \bar{\mathcal{F}}_{11}( \rhoa )^{(k)} - \frac{\pi}{2} \right] \nonumber\\
& \;\;\;\; + \sum_{k > k^\prime} \trace \left[ - \bm{F}_1^{(k)} \rhoakk \bm{F}_1^{(k^\prime)\, \dag} - \bm{F}_1^{(k^\prime)} \rhoakk \bm{F}_1^{(k)\, \dag} \right]  = \nonumber \\
& \;\;\;\; = \sum_{k,k^\prime} \trace \left[  \rhoakk \bm{U}_{B B^\dag}^{(k,k^\prime)\, \dag } \right]   \\
& \sum_{k,k^\prime} \trace \left[ \bm{U}_{B^\dag B}^{(k,k^\prime)} \rhoakk \right] = 
\sum_{k} \trace \left[ - \bm{F}_2^{(k)} \rhoak \bm{F}_2^{(k)\, \dag}  - \bar{\tau} \bar{\mathcal{F}}_{22} ( \rhoa ) ^{(k)} - \frac{\pi}{2}\, |b^{(k)}|^2 \right] \nonumber\\
& \;\;\;\; + \sum_{k > k^\prime} \trace \left[ - \bm{F}_2^{(k)} \rhoakk \bm{F}_2^{(k^\prime)\, \dag} - \bm{F}_2^{(k^\prime)} \rhoakk \bm{F}_2^{(k)\, \dag} \right]  = \nonumber \\
& \;\;\;\; = \sum_{k,k^\prime} \trace \left[  \rhoakk \bm{U}_{B^\dag B}^{(k,k^\prime)\, \dag } \right]  \label{eq:traceK2part2} .
\end{align}}
Finally, thanks to observations (\ref{eq:traceK2part1}), (\ref{eq:traceK2part2}) and the definition of $f_1,f_2$ in (\ref{eq:K2defmore}),
we compute the trace of $\mathcal{K}_2$ from (\ref{eq:Kq2expanded}) as follows:
{\small \begin{align}
 \trace [ \mathcal{K}_2(\rhos )] = & \sum_{k,k^\prime} \Big\{
  \trace \left[  \bm{B}_j^\dag \bm{B}_h \rhos  \right] 
  \trace \left[  \bm{U}_{B_j^\dag B_h  }^{(k,k^\prime)} \rhoakk + \rhoakk \bm{U}_{B_j B_h^\dag}^{(k,k^\prime)\,\dag}
   + \bm{F}_h^{(k)} \rhoakk \bm{F}_j^{(k^\prime)\,\dag} \right]  \Big\}  \nonumber \\
& + \trace \left[ \bm{B}^\dag \rhos \bm{B} \right] \,
     \trace \left[ f_1 + \sum_{k}\left( \bar{\tau} \bar{\mathcal{F}}_{11}^{(k)} + \frac{\pi}{2} \right)  \right]  \nonumber \\
& + \trace \left[ \bm{B} \rhos \bm{B}^\dag \right] \,
     \trace \left[ f_2 + \sum_{k}\left( \bar{\tau} \bar{\mathcal{F}}_{22}^{(k)} + \frac{\pi}{2} \, | b^{(k)} |^2 \right)  \right]  \nonumber \\
& + \trace \left[ \bm{B} \rhos \bm{B} \right] \,
     \trace \left[ \sum_{k} -i b^{(k)}  \right]  + \trace \left[ \bm{B}^\dag \rhos \bm{B}^\dag \right] \,
     \trace \left[ \sum_{k} i b^{(k)}  \right]  \nonumber \\
= & \; 0 , \nonumber 
\end{align}}
for all $\rhos$.
We bring to the attention of the reader that the gauge choices in $\mathcal{K}_2$ have been carefully
designed so as to yield crucial quantum properties:
\begin{itemize}
 \item $\bar{\tau}$ is selected according to Claim \ref{claim:K2Qcptp} to yield complete positivity
 of $\mathcal{K}_2$;
 \item the trace of operators $\bm{U}_{B B}^{(k,k^\prime)}$, $\bm{U}_{B^\dag B^\dag}^{(k,k^\prime)}$ are selected
 according to (\ref{eq:Uequation2}) to yield cancellation of the $\bm{B} \bm{B} \rhos$ and
 $\bm{B}^\dag \bm{B}^\dag \rhos$ terms and their hermitian conjugates in the invariance equation, thus ensuring the Lindblad form of
 $\mathcal{L}_{s,2}$;
 \item constants $b^{(k)}$ are selected according to (\ref{eq:kmap2many}) to yield
 cancellation of the $\bm{B} \rhos \bm{B}$ and
 $\bm{B}^\dag \rhos \bm{B}^\dag$ terms in the invariance equation, thus ensuring the Lindblad form of
 $\mathcal{L}_{s,2}$;
 \item the definition of $f_1$, $f_2$ and the trace of operators $\bm{U}_{B^\dag B}^{(k,k^\prime)}$, $\bm{U}_{B B^\dag}^{(k,k^\prime)}$ are selected
 according to (\ref{eq:K2defmore})  and (\ref{eq:Uequation3})-(\ref{eq:Uequation4}) respectively in order 
 to cancel out the positive terms in the trace of $\mathcal{K}_2$ corresponding to quadratic terms,
 as shown in (\ref{eq:traceK2part2}),
 thus ensuring trace-preservation of $\mathcal{K}$.
\end{itemize}

\addtolength{\textheight}{-3cm}   



{\parskip = -2.8mm
\noindent
\bibliographystyle{plain}
\bibliography{QUANTIC}}

\end{document}